\tikzset{initial text={},
    every state/.style={circle,minimum size=.4cm,draw=blue!50,very thick,fill=blue!20},
    secret/.style={minimum size=.4cm,draw=red!50,very thick,fill=red!20,rectangle},
    node distance=1.5cm,on grid,auto,
    bend angle=65}
\def\malar{M\"alardalen University\xspace}
\def\ie{{i.e.},~}
\def\eg{{e.g.},~}
\def\st{{s.t.}~}
\def\unk{\bot} 
\def\start{i_1} 
\def\et{\mbox{\normalfont\textsf{time}}}
\def\stree{\mbox{\normalfont\textsf{tree}}}
\def\wcet{\normalfont\text{WCET}}
\def\flag{\textit{flags}}
\def\setNZ{\textit{SetStatusB}}
\def\cmpU{\textit{cmpU}}
\def\NDcmp{\textit{NDcmp}}
\def\update{\textit{update}}
\def\tr{\textit{TR}}
\def\runs{\textit{Runs}}
\def\lang{\calL}
\def\proj{\textit{proj}} \def\run{\textit{run}}
\newcommand{\setB}{\mathbb B}
\def\calI{{\cal I}}
\def\calD{{\cal D}}
\def\calM{{\cal M}}
\def\calC{{\cal C}}
\def\calP{{\cal P}}
\def\calR{{\cal R}}
\def\calL{{\cal L}}
\def\calV{{\cal V}}
\def\calS{{\cal S}}
\def\li{\calL\calI}
\def\endef{\ifmmode\squareforged\else{\unskip\nobreak\hfil
\penalty50\hskip1em\null\nobreak\hfil$\blacksquare$
\parfillskip=0pt\finalhyphendemerits=0\endgraf}\fi}
\def\true{\mbox{\textsc{true}\xspace}}
\def\false{\mbox{\textsc{false}\xspace}}
\definecolor{gris}{rgb}{0.3, 0.3, 0.3}
\lstdefinelanguage{AssemblerARM9}{%
basicstyle={\fontsize{5}{6}\selectfont\sffamily},
commentstyle={\color{gris}\it},
morekeywords={push,pop,word,ldr,str,ldreq,add,sub,b,bl,bx,bxeq,bne,mov,cmp,movgt,ble,cmple,stmdb,ldmdb,ldm,mvn,ldrep,subeq,beq,subgt,addle,ldmia,andeq,bgt},
keywordstyle={\textbf},%
sensitive=false,%
backgroundcolor=\color{blue!10!white},
rulecolor=\color{blue!50!white},
fillcolor=\color{blue!20!white},
texcl=true,
xleftmargin=0.1cm,
flexiblecolumns=true,
morecomment=[s][\it\color{gris}]{/*}{*/},
showstringspaces=true,
escapechar=\%
}
\renewcommand*\thelstnumber{${\the\value{lstnumber}}\!\!:$}
\newcommand{\lline}[1]{\the\value{#1}}
\newcommand{\linelabel}[1]{%
\addtocounter{figure}{1}%
\immediate\write\@auxout{\string\newlabel{#1}{{\the\value{lstnumber}}{\thepage}%
{Line numbering}{figure.\thefigure}{}}}%
\addtocounter{figure}{-1}%
}
\newcommand{\FC}[1]{\textcolor{red}{#1}}
\newcommand{\sem}[1]{[\![#1]\!]}  
\newcommand{\isem}[1]{<\!\!#1\!\!>}
\newcommand{\plusminus}{\raisebox{.7mm}{\tiny $+/-$}}
\def\uppaal{\textsc{Uppaal}\xspace}
\def\emptyset{\varnothing}
\newcommand{\Paragraph}[1]{\vskip.2em\noindent{\bfseries \em #1.}}
\title{Computation of WCET using Program Slicing and Real-Time Model-Checking}
\author{Jean-Luc Béchennec, Franck Cassez\thanks{Author supported by a Marie Curie
    International Outgoing Fellowship within the 7th
    European Community Framework Programme.}}
\institute{{L'UNAM University, CNRS, IRCCyN,
 Nantes, France\\
}
}
\newcommand{\fname}[1]{#1} 
\begin{document}

\maketitle

\thispagestyle{empty}

\begin{abstract} 
  Computing accurate WCET on modern complex architectures is a
  challenging task.  A lot of attention has been devoted to this
  problem in the last decade but there are still some open issues.
  First, the control flow graph (CFG) of a binary program is needed to
  compute the WCET and this CFG is built using internal knowledge of
  the compiler that generated the binary code; moreover once
  constructed the CFG has to be manually annotated with loop bounds.
  Second, the algorithms to compute the WCET (combining Abstract
  Interpretation and Integer Linear Programming) are tailored for
  specific architectures: changing the architecture (\eg replacing an
  ARM7 by an ARM9) requires the design of a new ad hoc algorithm.
  Third, the tightness of the computed results (obtained using the
  available tools) are seldom compared to actual execution times
  measured on the real hardware.

  In this paper we address these problems. We first describe a fully
  automatic method to compute a CFG based solely on the binary program
  to analyse.  Second, we describe the model of the hardware as a
  product of timed automata, and this model is independent from the
  program description.  The model of a program running on a hardware
  is obtained by synchronizing (the automaton of) the program with the
  (timed automata) model of the hardware.  Computing the WCET is
  reduced to a reachability problem on the synchronised model and
  solved using the model-checker \uppaal.  Finally, we present a
  rigorous methodology that enables us to compare our computed results
  to actual execution times measured on a real platform, the
  ARM920T. 
    %
\end{abstract}


\noindent\textcolor{green!80!black}{Updated version, \today}

\section{Introduction}\label{sec-intro}

Embedded real-time systems are composed of a set of tasks (software)
that run on a given architecture (hardware).  These systems are
subject to strict timing constraints that must be enforced by a
scheduler.
Determining if a given scheduler can schedule the system is possible only if some
bounds are known about the execution times of each task.  Performance
wise, determining tight bounds is crucial as using rough
over-estimates might either result in a set of tasks being wrongly
declared non schedulable,
or leads to the choice of an overpowered and expensive hardware where
a lot of computation time is lost.


\Paragraph{\bfseries The WCET Problem}
Given a program $P$, some input data $d$ and the hardware $H$, the
\emph{execution-time} of $P$ on input $d$ on $H$,
is measured as the number of cycles of the fastest
component of the hardware \ie the processor.  
The program is given in binary code or equivalently in the assembly
language of the target processor.\footnote{When we refer to the
  ``source'' code, we assume the program $p$ was generated by a
  compiler, and refer to the high-level program (\eg in C) that was
  compiled into $P$.}  The \emph{worst-case execution-time} of program
$P$ on hardware $H$, $\wcet(P,H)$, is the supremum on all input data
$d$, of the execution-times of $P$ on input $d$ for $H$.
%
%
The WCET problem asks the following: Given $P$ and $H$, compute
$\wcet(P,H)$.

In general, the WCET problem is undecidable because otherwise we could
solve the halting problem.  However, for programs that always
terminate and have a bounded number of paths, it is 
computable.  Indeed the possible runs of the program can be
represented by a finite tree.
Notice that this does not mean that the problem is tractable though.

If the input data are known or the program execution time is
independent from the input data, the tree contains a single path and
it is usually feasible to compute the WCET.  Likewise, if we can
determine some input data that produces the WCET (which can be as
difficult as computing the WCET itself), we can compute the WCET on a
single-path program.

It is not often the case that the input data are known or that we can
determine an input that produces the WCET.  Rather the (values of the)
input data are unknown, and the number of paths to be explored might
be extremely large: for instance, for a Bubble Sort program with $100$
data to be sorted, the tree representing all the runs of the
(assembly) program on all the possible input data has more than
$2^{50}$ nodes.  Although symbolic methods (\eg using BDDs) can be
applied to analyse some programs with a huge number of states, they
will fail to compute the exact WCET on Bubble Sort by exploring all
the possible paths.

Another difficulty of the WCET problem stems from the increasingly
complex architectures embedded real-time systems are running on.  They
feature multi-stage \emph{pipelines} and fast memory components like
\emph{caches} that both influence the WCET in a complicated manner.
It is then a challenging problem to determine a precise WCET even for
relatively small programs running on complex architectures.

\Paragraph{\bfseries Methods and Tools for the WCET Problem}
The reader is referred to~\cite{wcet-survey-2008} for an exhaustive
presentation of WCET computation techniques and tools.  There are two
main classes of methods for computing WCET:
\begin{itemize}
\item Testing-based methods. These methods are based on experiments
  \ie running the program on some data, using a simulator of the
  hardware or the real platform. The execution time of an experiment
  is measured and, on a large set of experiments, maximal and minimal
  bounds can be obtained. A maximal bound computed in this way is
  \emph{unsafe} as not all the possible paths have been explored.
  These methods might not be suitable for safety critical embedded
  systems but they are versatile and rather easy to implement.

  RapiTime~\cite{rapitime} (based on pWCET~\cite{pWCET}) and
  Mtime~\cite{mtime} are measurement tools that implement this
  technique.

\item Verification-based methods.  These methods often rely on the
  computation of an \emph{abstract} graph, the control flow graph
  (CFG), and an abstract model of the hardware.  Together with a
  static analysis tool they can be combined to compute WCET.  The CFG
  should produce a super-set of the set of all feasible paths. Thus
  the largest execution time on the abstract program is an upper bound
  of the WCET.  Such methods produce \emph{safe} WCET, but are
  difficult to implement. Moreover, the abstract program can be
  extremely large and beyond the scope of any analysis. In this case,
  a solution is to take an even more abstract program which results in
  drifting further away from the exact WCET.

  Although difficult to implement, there are quite a lot of tools
  implementing this scheme: Bound-T~\cite{bound-T},
  OTAWA~\cite{otawa-2010}, TuBound~\cite{tubound},
  Chronos~\cite{chronos}, 
  SWEET~\cite{sweet-2003} and
  aiT~\cite{aiT,wcet-ai-aswsd-ferdinand-04} are static analysis-based
  tools for computing WCET.
\end{itemize}

The verification-based tools mentioned above rely on the construction
of a control flow graph, and the determination of loop bounds.  This
can be achieved using user annotations (in the source code) or
sometimes inferred automatically.  The CFG is also annotated with some
timing information about the cache misses/hits and pipeline stalls,
and paths analysis is carried out on this model \eg by Integer Linear
Programming (ILP).  The algorithms implemented in the tools use both
the program and the hardware specification to compute the CFG fed to
the ILP solver.  The architecture of the tools themselves is thus
monolithic: it is not easy to adapt an algorithm for a new processor.
This is witnessed by the \emph{WCET'08 Challenge
  Report}~\cite{wcet-chal-report-08} that highlights the difficulties
encountered by the participants to adapt their tools for the new
hardware in a reasonable amount of time.  Moreover, the results of the
computation are not  compared to actual execution times measured
on a real platform.  Notice that aiT reports comparisons with
ARMulator (the ARM simulator of the RealView Development Suite) but
this simulator is not cycle accurate as emphasised in ARM documentation
for ARMulator, Application Notes~93~\cite{arm-app-note-93}:
\begin{quote}\small \em
  ARMulator consists of C based models of ARM cores and as such cannot
  be guaranteed to completely reproduce the behaviour of the real
  hardware. If 100\% accuracy is required, an HDL model should be
  used.
\end{quote}

\Paragraph{\bfseries Outline of the Paper}
Section~\ref{sec-related-work} presents our contribution and related
work.  In Section~\ref{sec-archi} we give the specification of the
hardware we use in the experiments. Section~\ref{sec-semantics} gives
some formal definitions for program execution on a given hardware.
Section~\ref{sec-wcet} presents a modular way to compute the WCET of a
given program.  Section~\ref{sec-cfg} presents the technique we use to
automatically build the CFG.  Section~\ref{sec-hw} gives the \uppaal
timed automata models of the hardware.  In Section~\ref{sec-implem} we
report on the implementation and tool chain we have
developed. Section~\ref{sec-experiments} describes the methodology we
use to compare our computed WCET with actual WCET and contains a
summary of the results.  Section~\ref{sec-conclu} concludes with our
ongoing and future work..

\section{Related Work}\label{sec-related-work}
\Paragraph{\bfseries WCET and Model-Checking}
Only a few tools use model-check\-ing techniques to compute WCET.
Considering that ($i$) modern architectures are composed of
\emph{concurrent} components (the units of the different stages of the
pipeline, the caches) and ($ii$) the \emph{synchronization} of these
components depends on \emph{timing constraints} (time to execute in
one stage of the pipeline, time to fetch data from the cache), formal
models like \emph{timed automata}~\cite{AD94} and state-of-the-art
\emph{real-time model-checkers} like
UPPAAL~\cite{uppaal-sttt-97,uppaal-40-qest-behrmann-06} appear
well-suited to address the WCET problem.

In~\cite{wcet-cav-metzner-04}, A.~Metzner showed that model-checkers
could well be used to compute safe WCET on the CFG for programs
running on pipelined processors with an instruction cache. More
recently, Lv \emph{et al.}~\cite{lv-rtss-2010} combined AI techniques
with real-time model-checking (and \uppaal) to compute WCET on
multicore platforms.

In~\cite{huber-wcet-09}, B.~Huber and M.~Schoeberl consider Java
programs and compare ILP-based techniques with model-checking
techniques using the model-checker UPPAAL.  Model-checking techniques
seem slower but easily amenable to changes (in the hardware model).
The recommendation is to use ILP tools for large programs and
model-checking tools for code fragments.


\medskip

The use of timed automata (TA) and the model-checker UPPAAL for
computing WCET on pipelined processors with caches was reported
in~\cite{metamoc-2009,metamoc-2010} where the METAMOC method is
described. METAMOC consists in: 1) computing the CFG of a program, 2)
composing this CFG with a (network of timed automata) model of the
processor and the caches.  Computing the WCET is then reduced to
computing the longest path (timewise) in the network of TA.

The previous framework is very elegant yet has some shortcomings: (1)
METAMOC relies on a value analysis phase that may not terminate, (2)
some programs cannot be analysed (if they contain register-indirect
jumps), (3) some manual annotations are still required on the binary
program, \eg loop bounds and (4) the \emph{unrolling} of loops is not
safe for some cache replacement policies (FIFO).

In a previous work~\cite{cassez-acsd-11} we have already reported some
similar results on the computation of WCET using TA.
In~\cite{cassez-acsd-11}, what is similar to METAMOC is the use of
network of timed automata to model the cache\footnote{Note that a
  similar model is reportedly due to A.~P.~Ravn
  in~\cite{huber-wcet-09}.} and pipeline stages.  However, in this
preliminary work we had chosen to: (1) build the CFG without any need
for annotations and (2) use a new and very compact encoding of the
program and pipeline stages' states. In contrast  METAMOC uses a values
analysis phase and requires loop bounds annotations to obtain an
(unfolded) graph of the program.

\Paragraph{\bfseries Our Contribution}
Compared to our previous work~\cite{cassez-acsd-11}, this paper
contains three new original contributions: (1) an automatic method to
compute a CFG and a reduced abstract program equivalent WCET-wise to
the original program; (2) detailed hardware formal models and (3) a
rigourous methodology to make it possible the comparison of computed WCET
to actual WCET measured on a real hardware.


\section{Architecture of the ARM920T}\label{sec-archi}
The development board we model and use in the experiment section. It
is an Armadeus APF9328 board~\cite{armadeus} which bears a 200MHz
Freescale MC9328MXL micro-controller with an ARM920T processor. The
processor embeds an ARM9TDMI core that implements the ARM v4T
architecture.  An overview of the ARM920T architecture is given in
Fig.~\ref{fig:arm920T}.  The component we model in
Section~\ref{sec-hw-model} are highlighted in orange.
\begin{figure}[htbp] 
   \centering
   \includegraphics[width=.9\textwidth]{\fname{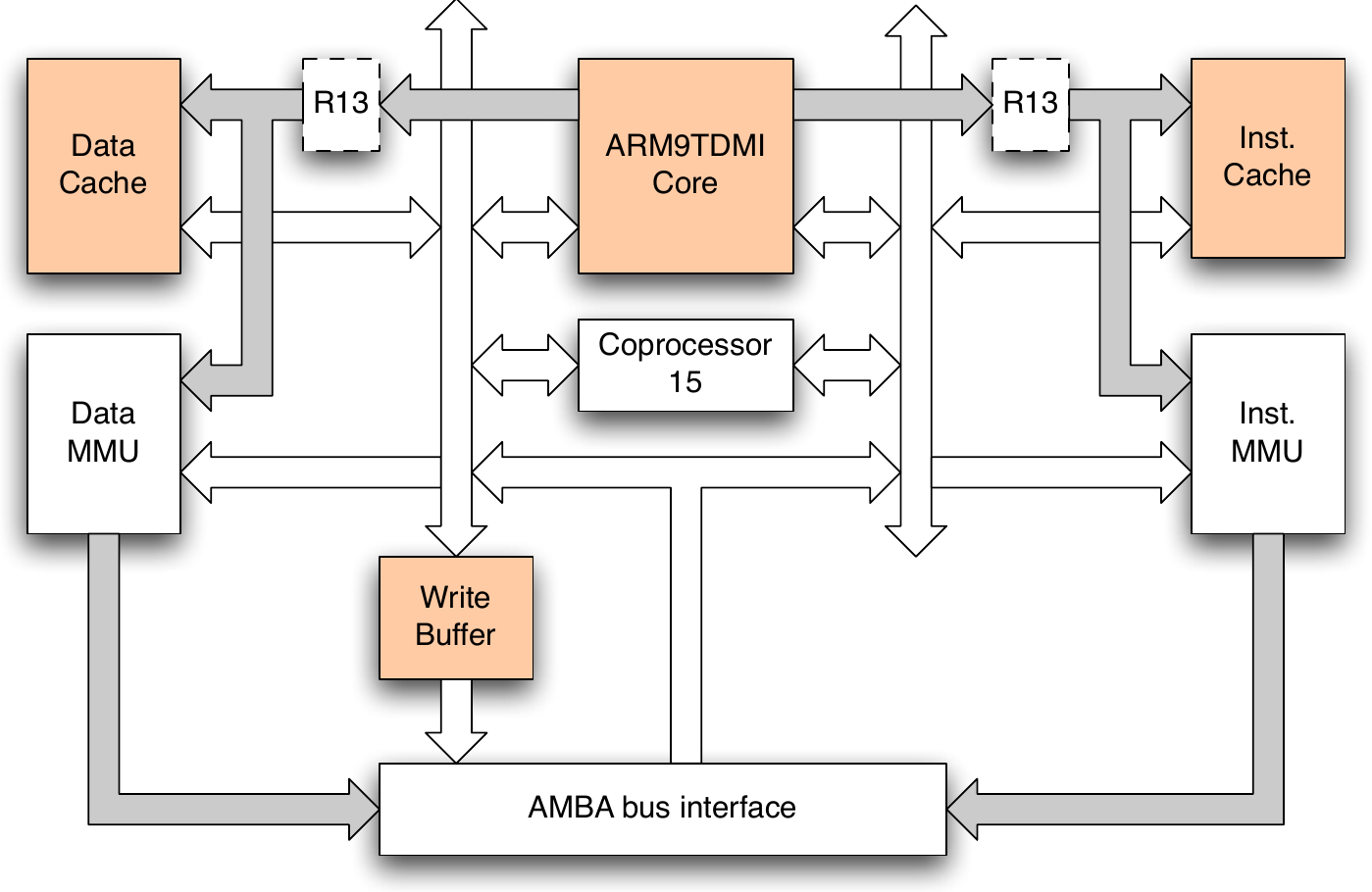}} 
   \caption{Simplified block diagram of the ARM920T. Gray arrows are
     address buses/connections. White arrows are data
     buses/connections. Both are 32 bits wide. Coprocessor 15 hosts
     control registers for the caches and the MMUs. Actually Register
     R13 (which should not be confused with the ARM9TDMI {\em r13} register
     presented in section \ref{sec:isa})
     is not duplicated and is located in Coprocessor 15. It hosts
     a process ID used for virtual address to physical address
     translation. Some blocks like the Write back Physical TAG RAM and
     various debug and/or coprocessor interfaces are not shown.}
   \label{fig:arm920T}
\end{figure}

\subsection{Reduced Instruction Set Computer Architecture}
\label{sec:isa}

The ARM architecture is a {\em Reduced Instruction Set Computer}
(RISC) architecture. The instruction set consists of fixed size
instructions and a few simple addressing modes.  There are $16$
general purpose registers $r_0$ to $r_{15}$, specialized memory
transfer instructions (load/store), and data-processing instructions
that operate on registers only.
Other interesting features are \emph{multiple} load/store instructions
and \emph{conditional} execution of instructions (to improve data and
execution throughput).

Three of the general purpose registers are used in a specialized way.
Register $r_{13}$ is the {\em stack pointer} (we use {\em sp} in the
sequel to refer to this register).  Register $r_{14}$ is the {\em link
  register} ({\em lr} in the sequel) and hosts the return address of
function calls.  Register $r_{15}$ is the {\em program counter} ({\em
  pc} in the sequel).  

An instruction is defined by a mnemonic\footnote{And the condition and
  flags (like the ``s'' flag).} (\eg \emph{mov}) and the operands.  In
the sequel, we let $\calR=\{r_0,\cdots,r_{12},$ $sp,lr,pc\}$ be the set
of registers of the architecture and $\calI$ be the (finite) set of
RISC instructions.

\subsection{Execution Pipeline}
\label{sec:pipeline}

The ARM920T uses a 5-stage \emph{execution pipeline}, the purpose of
which is to execute concurrently the different tasks (Fetch, Decode,
Execute, Memory, Writeback) needed to perform an instruction. The
(normal) flow of instructions in the pipeline is shown in
Fig.~\ref{fig:arm920Tpipeline}. 
\begin{figure}[htbp] 
   \centering
   \includegraphics[scale=.7]{\fname{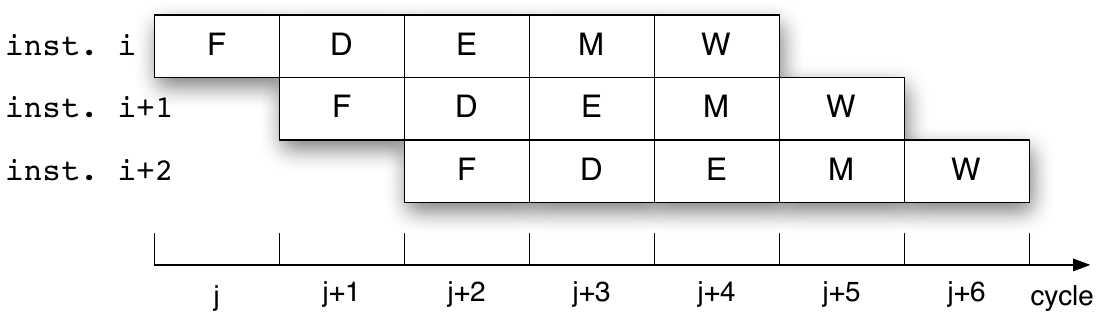}} 
   \caption{Pipeline of the ARM920T: Instruction is fetched in
     F. Instruction decode and operand register accesses are done in
     D. Execution is done in E. Load/store instructions do their
     memory accesses in M. Results are written back to registers in
     W.}
   \label{fig:arm920Tpipeline}
\end{figure}
%
This optimal flow may be slowed down when pipeline \emph{stalls}
occur.
Most of the time, two independent consecutive instructions do not
incur a \emph{stall} and the throughput is 1
instruction/cycle. However in certain cases stalls can occur.
\begin{figure}[htbp] 
   \centering
   \includegraphics[scale=.7]{\fname{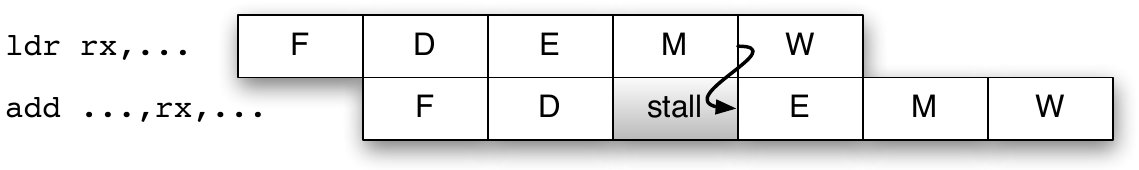}} 
   \caption{Load delay pipeline hazard.}
   \label{fig:pipeLoadDelay}
\end{figure}
Assume instruction \texttt{ldr r1,[sp,\#0]} is followed by \texttt{add
  r0,r1,\#1}: the first instruction loads register $r_1$ (with the
content of a memory cell) and the second uses $r_1$ to compute
\emph{r0}.  This sequence of instructions brings about a {\em load
  delay} depicted on Fig.~\ref{fig:pipeLoadDelay}.  One \emph{stall}
cycle is inserted before processing instruction \texttt{add r0,r1,\#1}
because the load instruction produces the operand needed ($r_1$) at
the E stage of the add
instruction at the end of its M stage.
%

Sometimes the target address of a branch instruction is produced at
the end of the E stage (\eg conditional branching that needs the
result of a comparison operation). The ARM920T does not implement any
\emph{branch prediction} mechanism. As a consequence fetching the next
instruction can only be done after the branch instruction has
completed the E stage: this causes a {\em branch delay} depicted
Fig.~\ref{fig:pipeBranchDelay} that results in 2 stall cycles before
the fetch of the branch target instruction can be performed.

\begin{figure}[htbp] 
   \centering
   \includegraphics[scale=.7]{\fname{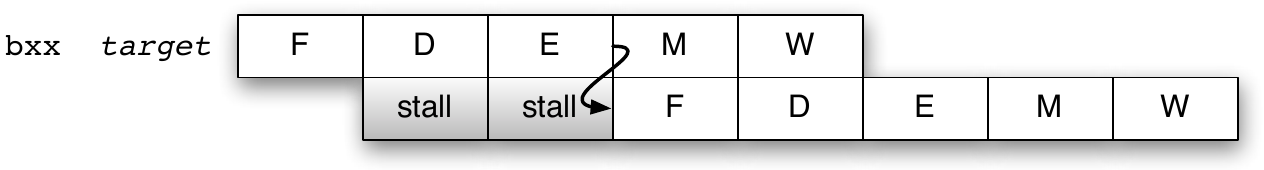}} 
   \caption{Branch delay pipeline hazard.}
   \label{fig:pipeBranchDelay}
\end{figure}

\subsection{Main Memory, Instruction and Data Cache, Write Buffer}
\label{sec-cache}
Both instruction and data caches have the same architecture. They are
16KB, 8-ways set associative caches. There are 64 sets and 512 32
bytes long lines.  Replacement policy may be set to pseudo-random or
round-robin (FIFO).  Both caches implements
\emph{allocate-on-read-miss} \ie a data is inserted in the cache if
missing when a read is performed.

The data cache may be configured in \emph{write-through} (when data in
the cache is modified, it is immediately written to the main memory)
or \emph{write-back} (modified cached data are only written to main
memory when needed) but does not implement
\emph{allocate-on-write-miss}: if non cached data is written to, they
are not cached but instead written to main memory directly.  So, even
if configured in write-back, a write miss acts as a write-through.
Each data cache line has 2 \emph{dirty} bits (indicating that a cached
item has been modified since last cached), one per half-line, to
indicate the half-line must be written back when it is replaced.

A 16-word \emph{write buffer} helps to reduce stalls when a write to
the main memory occurs because of a write miss or, if the cache is
configured in write-back, when a dirty line has to be replaced. The
write buffer is organized in 4 half-line entries to allow cache
write-back on a half-line basis.

Finally, transfers between the caches and main memory are serialized
and the bus abstracted away.



\section{Program Semantics}\label{sec-semantics}
In this section we present the formal semantics for the execution of
binary programs. 
We make the following assumptions on the binary programs we
analyse:
\begin{itemize}
\item[(A1)] the termination of a program does not depend on input data, \ie
  a program terminates for all input data; and
\item[(A2)] reference to stack values is via the specialised register
  \emph{sp} only.\footnote{Note that these assumptions are not
    compulsory but they are made in the current implementation of our
    tool in the Compute CFG component (See Section~\ref{sec-implem}).
    Moreover, they are satisfied by programs obtained using a compiler
    conforming to the ARM ABI~\cite{armabi}.  However, the technique
    described here can be extended to encompass a more general
    framework.}
\item[(A3)] references to memory cells are independant from input
  data. This ensures that when an instruction computes the address of
  a memory cell, it is always defined.
\item[(A4)] The programs do not contain recursive calls.
\end{itemize}

\subsection{Notations}
We let $\setB=\{\true,\false\}$.  We have already introduced some
notations: $\calR$ is the set of registers of the hardware, $\calI$ is
the (finite) set of instructions the hardware can perform, and $\calM$
the (finite) set of main memory cells the program can access.  In the
sequel we will introduce a set of predicates $\calP$ and for $x \in
\calR \cup \calP \cup \calM$ (set of registers or predicates or memory
cells), $\sem{x}$ denotes the content of $x$.  A \emph{program state}
$s$ is a valuation of the variables in $\calR \cup \calP \cup \calM$
\ie a mapping from $\calR \cup \calP \cup \calM$ to $\calD$ where
$\calD$ is a finite set \eg 32-bit integers.  We let $\calS$ be the
set of program states.

As program instructions are located in main memory, we define the set
of labelled instructions $\li=\calM \times \calI$ to be the set of
pairs $(\ell:i)$ indicating that instruction $i$ is stored at address
$\ell$ in main memory.  Consequently, a program $P$ is simply a subset
(necessarily finite) of $\li$.  We use the notation $\isem{\iota}:
\calS \rightarrow \calS$ to denote the \emph{semantics} of instruction
$\iota \in \li$.

\begin{remark}
  The semantics is defined on labelled instructions which means that
  the semantics itself may depend on the address the instruction is
  stored at.  This is actually the case for some instructions like
  \texttt{12: ldr r0,[pc,\#4]} the semantics of which is ``load
  register $r_0$ with the content of the memory cell located at offset
  $4$ from the current value of \emph{pc}'', \ie at address
  $12$.\footnote{In pipelined architecture, the actual memory address
    is translated due to pipelining. For example in the ARM9, the
    address is \emph{$8+$ the offset} that appears in the
    instruction.}
\end{remark}

\subsection{Example Program}
As a running example we take the binary program FIBO of
Listing~\ref{lis-fibo-30}. It has been compiled (\texttt{gcc}) and
de-assembled (\texttt{objdump}) using the GNU ARM tools from
Codesourcery~\cite{codesourcery}.  It computes $\textit{Fib}(30)$ the
Fibonacci number $u_{30}$ with $u_0=1$, $u_1=1$ and
$u_n=u_{n-1}+u_{n-2}, n \geq 2$.
A program is stored in memory and the memory address of each
instruction is the leftmost decimal number.\footnote{Instructions
  addresses are multiple of $4$ in the ARM 32-bit instruction set.}
Each program has a designated initial instruction
$\iota_0=(\ell_0,i_0)$ and $\sem{\emph{pc}}=\ell_0$ at the beginning
of the execution of the program.

To give the semantics of programs, we assume there is a set of
variables $\calP=\{le,gt,\cdots\}$ to hold the truth values of the
predicates used in the conditional instructions of the
program\footnote{In the ARM 32-bit instruction set, the truth values
  of these predicates are stored in the status bits N, Z, C, V.}.

The semantics of program FIBO is given in terms of assignments to
registers (on the right-hand side of each instruction in
Listing~\ref{lis-fibo-30}). Each instruction assigns a new value to
register \emph{pc}: except for branching instructions the assignment
is $\sem{pc}:=\sem{pc}+4$ and we omit it in this case.  A comparison
operator (\eg line~24) sets the truth value of the predicates that are
used later in the program (\eg \emph{eq} for instruction at line~24).
The main loop of the computation is between line~24 and~52; notice
that this optimized program (compiled with option -O2) computes
$u_{n+2}$ in each round of this loop ($r_2$ holds the value of $n$ and
is incremented twice in the body of the loop).

\begin{center}
\begin{minipage}[t]{0.6\linewidth}
\begin{lstlisting}[language=AssemblerARM9,numbers=none,caption={Program FIBO: computes Fibonacci
$30$},label={lis-fibo-30},basicstyle={\fontsize{8}{9}\selectfont\ttfamily}]
0 <main>: /* starts at address 0;  %$\sem{lr}$% is the return address */
0   mov    r1,#30       %$\sem{r_1} := 30$ %
4   mov    r2,#2        %$\sem{r_2} := 2$ %
8   add    r1,r1,#1     %$\sem{r_1} := \sem{r_1} + 1$ %      
12  add    r2,r2,#1     %$\sem{r_2} := \sem{r_2} + 1$ %
16  mov    r0,#0        %$\sem{r_0} := 0$ %       
20  mov    r3,#1        %$\sem{r_3} := 1$ % 
24  cmp    r2,r1        %$\sem{eq} := (\sem{r_2}=\sem{r1})$ %  
28  add    r0,r0,r3     %$\sem{r_0} := \sem{r_0} + \sem{r3}$ % /* %$r_0=u_n$% */      
32  bxeq   lr           %if $(\sem{eq})$ $\sem{pc}:=\sem{lr}$ else $\sem{pc}:=36$ %
36  add    r2,r2,#1    
40  add    r2,r2,#1            
44  add    r3,r3,r0     %$\sem{r_3} := \sem{r_3} + \sem{r_0}$ %
48  cmp    r2,r1        %$\sem{eq} := (\sem{r_2}=\sem{r1})$ %        
52  add    r0,r0,r3    
56  bne    24           %if $(\neg\sem{eq})$ $\sem{pc}:=24$ else $\sem{pc}:=60$ %        
60  bx     lr           %$\sem{pc}:=\sem{lr}$%    
\end{lstlisting}
\end{minipage}
\end{center}

\subsection{Abstract Hardware Model}
The real hardware (Section~\ref{sec-archi}) consists of the pipelined
processor, instruction and data caches, write buffer and the main
memory of the computer.
We abstract away the details of the communication medium (AMBA bus,
MMU\footnote{The MMU is considered to be programmed to make a
  translation from a virtual address page $v$ to the physical address
  page $p$ such as $p = v$.}).  We choose to treat the content of the
main memory as a component of the \emph{program state} and thus it is
not part of the state of the hardware.  The same remark applies for
the register and we consider they are part of the program state.  A
\emph{state} of the hardware is then defined by the states of the
different stages of the pipeline and the states of the caches.

As we are only interested in computing execution times, we can
consider that the hardware is an abstract machine $H$ that reads
sequences of \emph{triples} $(\iota,A,d) \in \li \times \calM \times
\setB$ and outputs the \emph{time} it takes to process such a
sequence.  A triple $(\iota,A,d)$ consists of a (labelled) instruction
$\iota=(\ell:i)$ with $\ell \in \calM$ and $i \in\calI$ that
references a set of (main memory) addresses in $A$ and is performed if
$d=\true$; if $d=\false$ the instruction is a conditional instruction
and the condition the instruction depends on last evaluated to
$\false$.
Such triples (and sequences thereof) contain enough information to
compute the execution time:
\begin{itemize}
\item pipeline stalls (see~\ref{sec:pipeline}) can be inferred
  from the first component of the triple $\iota$ that contains the full
  text of the instruction and thus the read/written registers and the
  value of $d$ (whether the instruction is executed or not);
\item cache hits/misses (see~\ref{sec-cache}) are completely
  determined by the set $A$.
\end{itemize}
Examples of instructions for program FIBO (Listing~\ref{lis-fibo-30})
are \texttt{0: mov r1,\#30} and \texttt{32:bxeq lr}.  Notice that
there is no need for actual register values in $H$ neither for
performing the real computation as the timing of instructions in the
pipeline and the cache is fully determined by the instruction (and its
location in memory), whether it is performed or not (there are
conditional instructions), the registers read from/written
to\footnote{Some instructions (MUL/MLA/SMULL) have data dependent
  durations. In this case an upper bound can be used or a
  non-deterministically chosen value (see Section~\ref{sec-hw-model}
  for details).} and the memory addresses used in the instruction. The
fact that there is no branch prediction in the pipeline of the
hardware in the ARM920T makes things simpler but the framework we
present extends to the case with branch prediction
(see~\cite{cassez-acsd-11}).

The execution time of a sequence of triples also depends on the
initial state $\gamma$ of the hardware $H$.  Given a finite sequence
$w=q_0q_1q_2 \cdots q_n \in (\li \times \calM \times \setB)^*$ and an
initial state $\gamma$ of $H$, $\et_H(\gamma,w)$ is the execution time
of $w$ from initial state $\gamma$ of $H$.  It can be defined
precisely using for instance the HDL model of the hardware.  Notice
that at this point, we do not require sequences of triples to be
actual sequences produced by program $P$.

\subsection{Trace Semantics of a Program}

The execution of a program can be defined by an alternating sequence
of \emph{program states} and instructions.  
%
A \emph{run} of a program $P$ is a sequence $ \varrho = s_0 \; \iota_0
\; s_1 \; \iota_1 \; s_2 \; \iota_2 \quad \cdots \quad s_{n-1} \;
\iota_{n-1} \; s_n$ where $s_k$ is a program state and
$\iota_k=(\ell_k:i_k)$ is a labelled instruction with $\ell_k=s_k(pc)$
and such that $s_{k+1}=\isem{\iota_k}(s_k)$.
%
We let $\runs(P)$ be the set of runs of $P$.

The \emph{trace}, $\tr(\varrho)$, of the run $\varrho$ is the sequence
$q_0q_1q_2 \cdots q_n \in (\li \times \calM \times \setB)^*$ with
$q_i=(\iota_i,A_i,d_i)$ where $A_i$ is the set of memory addresses
referenced by instruction $\iota_i$ in state $s_i$ and $d_i \in \setB$
indicates whether the instruction is actually executed\footnote{$S_i$
  and $d_i$ can always be computed from $s_i$ and $\iota_i$.}.
For instance, instruction \texttt{ldr r0,[sp, \#4]}\footnote{The
  semantics is $s(r_0) := s ( s(sp)+4 )$.} from a program state $s$
with $s(sp)=12$ references address $16$ and is performed
(unconditional).
Instruction \texttt{128: addle r1,r1,\#1} is performed only if the
last comparison set the predicate \emph{le} $\true$ and from program
state $s$ the next triple in the trace is ($\texttt{128: addle
  r1,r1,\#1}$, $\emptyset$, $s(le))$.  As there are \emph{multiple}
load and store instructions, we need \emph{sets} of addresses to
represent the memory cells referenced by an instruction: instruction
\texttt{stm sp,\{r0,r1\}}\footnote{The semantics is $s(s(sp)):=s(r_1)$
  and $s((s(sp)-4):=s(r_0)$.} references addresses $12$ and $8$.

The execution time of a run $\varrho$ of $P$ from initial state
$\gamma$ of $H$ is defined by $\et_H(\gamma,\tr(\varrho))$.

Program $P$ has a set of \emph{initial} states $I$ (where
$\sem{\emph{pc}}$ gives the initial instruction of $P$) and the
contents of the registers, predicates and main memory can be in a
finite set of values. Notice that there can be many initial states as
the input data of $P$ can range over large sets. $P$ also has a set of
\emph{final} states, $F$, and we assume it can be defined using the
value of register \emph{pc} which gives the last instruction of $P$.
The language $\lang_I^F(P)$ of $P$ is the set of traces generated by
runs of $P$ that start in $I$ and end in $F$ \ie $\lang_I^F(P)= \{
\tr(\varrho)\ | \ \varrho= s_1 \iota_1 \cdots s_{n-1}\iota_{n-1}s_n,
\varrho \in \runs(P), s_1 \in I, s_n \in F\}$.  As we assume that $P$
always terminates for any input data, this language is finite (because
the set of memory contents is finite).

\section{Computation of the WCET}\label{sec-wcet}
\subsection{Modular Definition of WCET}
Given a run $\varrho$ of $P$, the execution time of $\varrho$ on $H$
from state $\gamma$ only depends on $\tr(\varrho)$.  This implies that
the WCET of $P$ only depends on $\lang_I^F(P)$ and the initial state
$\gamma$ of $H$.  Consequently if $\lang_I^F(P)$ is finite
\begin{equation} \label{eq-wcet-mod}
\wcet(P,H)=\max_{w \in \lang_I^F(P)} \et_H(\gamma,w)\mathpunct.
\end{equation}
The computation of $\wcet(P,H)$ thus amounts to ($i$) generating
$\lang_I^F(P)$, ($ii$) feeding $H$ with each $w \in \lang_I^F(P)$ and
tracking the maximal execution time.
This gives a \emph{modular} way of computing $\wcet(P,H)$ since a
generator for $\lang_I^F(P)$ and the behaviour of the abstract
hardware $H$ to be fed with $\lang_I^F(P)$ can be given independently
of each other.

\subsection{Extended Domain Abstraction}
In order to take into account all the possible values of the input
data, we use an \emph{extended domain} for the values of the main
memory cells.  We assume here that the values of the registers and
predicates are known in the initial state.

Let $\calD_\bot=\calD\cup \{\bot\}$ be the extended domain with $\bot$
the \emph{unknown value}. The semantics of instructions is extended to
this extended domain: for instance, the semantics of \texttt{add
  r0,r1,\#1} is given by
\[\sem{r_0}= \textit{ $\bot$ if $(\sem{r_1}=\bot)$
  and $\sem{r_1}+1$ otherwise}\mathpunct.
\]  
The semantics of comparison
instructions \eg \texttt{cmp r0,r1} is extended as well to
$\calD_\bot$ \eg for instruction 24 of program FIBO, 
\[
\sem{eq}= \bot
\textit{ if $((\sem{r_0}=\bot)$ or $(\sem{r_1}=\bot))$ and ($\sem{r_0}=\sem{r_1}$)
  otherwise}\mathpunct.
\]
When a conditional instruction is encountered and the condition is
$\bot$, the extended semantics of the instruction considers two
successors: one where the condition is $\true$ and the other where the
condition is $\false$.  If a branching instruction like $\mathtt{bx
 \  lr}$ is encountered and $\sem{lr}=\bot$ the next instruction is
undefined (we can encode this by jumping to a special ``error'' state
but this situation will not occur in the sequel).

We may now define an extended \emph{symbolic} semantics for a program
$P$, and starting from an initial state $s_0 : \calR \cup \calP \cup
\calM \rightarrow \calD_\bot$, the symbolic semantics define a
\emph{set} of runs (non-determinism may arise if some conditions are
tested and unknown).

Assume that the values of the registers and predicates are fixed in
the initial program and given by $s_0(\calR \cup \calP)$ and the input
data is $d$: the initial state of the memory is $s_0(d)$ with $s_0(d):
\calM \rightarrow D$.  The initial state of the program is thus
defined by $s_0(\calR \cup \calP)\cdot s_0(d)$.

Define $s_0^\bot : \calR \cup \calP \cup \calM \rightarrow D_\bot$ by:
$s_0^\bot(x)=s_0(x)$ for $x \in \calR \cup \calP$ and
$s_0^\bot(y)=\bot$ for $y \in \calM$.

The important property of the extended semantics is $(\pi)$: if $\varrho$ is a
run of $P$ from state $s_0(d)$, then $\varrho$ is a run of $P$ from
$s_0^\bot$ in the extended symbolic semantics.

In the sequel we write $\lang_\bot(P)$ for $\lang_{\{s_0^\bot\}}^F(P)$
and $\wcet_\bot(P,H)=\max_{w \in \lang_{\bot}(P)}
\et_H(\gamma,w)$. The property $(\pi)$ of the symbolic semantics
implies that $\lang_I^F(P) \subseteq \lang_{\bot}(P)$ and by language
inclusion we have
\begin{equation} \label{eq-wcet-mod-extended-dom} \wcet(P,H)=\max_{w
    \in \lang_I^F(P)} \et_H(\gamma,w) \leq \max_{w \in
    \lang_{\bot}(P)} \et_H(\gamma,w) = \wcet_\bot(P,H) \mathpunct.
\end{equation}
We can thus reduce the computation of (an upper bound of the)
$\wcet(P,H)$ to a symbolic simulation of program $P$ on the extended
domain $\calD_\bot$ from a unique initial state $s_0^\bot$.

As we have assumed that termination does not depend on the input data,
but is guaranteed for each program $P$, the symbolic simulation of $P$
on the extended domain terminates as well.  Each test that ensures
termination in $P$ cannot evaluate to $\bot$ because otherwise it
would depend on the input data and this would contradict assumption
(A1).

\subsection{WCET Computation as a Reachability Problem}
\label{sec-reach}
We can reduce the computation of the WCET to a \emph{reachability}
problem on a network a timed automata.  Indeed, as $\lang_{\bot}(P)$
is finite, it can be generated by a finite automaton
$\textit{Aut}(P)$.  The hardware $H$ (including pipeline, caches and
main memory) can be specified by a network of \emph{timed automata}
$\textit{Aut}(H)$ (formal models are given in Section~\ref{sec-hw}).
Feeding $H$ with $\lang_{\bot}(P)$ amounts to building the
\emph{synchronised} product $\textit{Aut}(H) \times \textit{Aut}(P)$.
On this product we define \emph{final states} to be the states where
the last instruction of $P$ flows out of the last stage of
pipeline. Assume a fresh\footnote{$x$ is not a clock of
  $\textit{Aut}(H)$.} clock $x$ is reset in the initial state of
$\textit{Aut}(H) \times \textit{Aut}(P)$. The WCET of $P$ on $H$ is
then the largest value, $\max(x)$, that $x$ can take in a final state
of $\textit{Aut}(H) \times \textit{Aut}(P)$ (we assume that time does
not progress from a final state).

We can compute $\max(x)$ using model-checking techniques with the tool
\uppaal~\cite{uppaal-40-qest-behrmann-06} (see
Section~\ref{sec-implem}).  To do this, we check a reachability
property ``(R): Can we reach a final state with $x \geq K$?'' on
$\textit{Aut}(H) \times \textit{Aut}(P)$.  If the property is true for
$K$ and false for $K+1$, $K$ is the WCET of $P$.  We can compute this
maximal value using the \emph{sup} operator that gives the maximal
value a clock can have in a reachable state.

Notice that to do this we have to explore the whole state
space\footnote{Checking that (R) is false or computing \emph{sup
    clock} implies the exploration of all the reachable states.} of
$\textit{Aut}(H) \times \textit{Aut}(P)$.  This means that to handle
large case studies, we need to reduce the state space as much as
possible.

An important point to notice is that the tightness of the WCET we
compute depends on an accurate description of $H$. The more precise
(time-wise) $\textit{Aut}(H)$ is, the more precise the computed WCET
will be.  It is thus not reasonable to take a very abstract $H$ (\eg
with caches that always miss) as it will give poor WCET estimates.
We can still have some control on the automaton $\textit{Aut}(P)$ that
generates the traces to be fed to $\textit{Aut}(H)$.  Indeed, we
should avoid generating two runs with the same trace as it will give
the same WCET (from the same initial state of $H$). This means that
\emph{minimizing} $\textit{Aut}(P)$ can effectively reduce the state
space (at least the number of paths explored in the product
$\textit{Aut}(H) \times \textit{Aut}(P)$).
In the next section we describe how to compute a reduced program $P'$
that generates the same set of traces as $P$.

\section{Slicing} \label{slicing} \emph{Program Slicing} was
introduced by Mark Weiser~\cite{weiser-84} in 1984.  The purpose of
program slicing is to compute a program \emph{slice} (by removing some
statements of the original program) \st the slice computes the same
values for some variables at some given statements. Program slicing is
often used for checking properties of programs. The reader is refered
to~\cite{Tip95asurvey} 
for a survey on the principles of (static and dynamic) slicing.

\subsection{Overview of Program Slicing}
In this section, we assume that we have the control flow graph of $P$,
$CFG(P)$, which is a directed graph, the nodes of which are in $P$.
$CFG(P)$ has a single \emph{entry} node (initial instruction of the
program $P$) and a single \emph{exit} node (that indicates the end of
program $P$). An example of a CFG for the Fibonacci program of
Listing~\ref{lis-fibo-30} is given in Figure~\ref{fig-cfg-fibo}.

\begin{figure}[thbtp]
  \centering
  \includegraphics[scale=0.45]{\fname{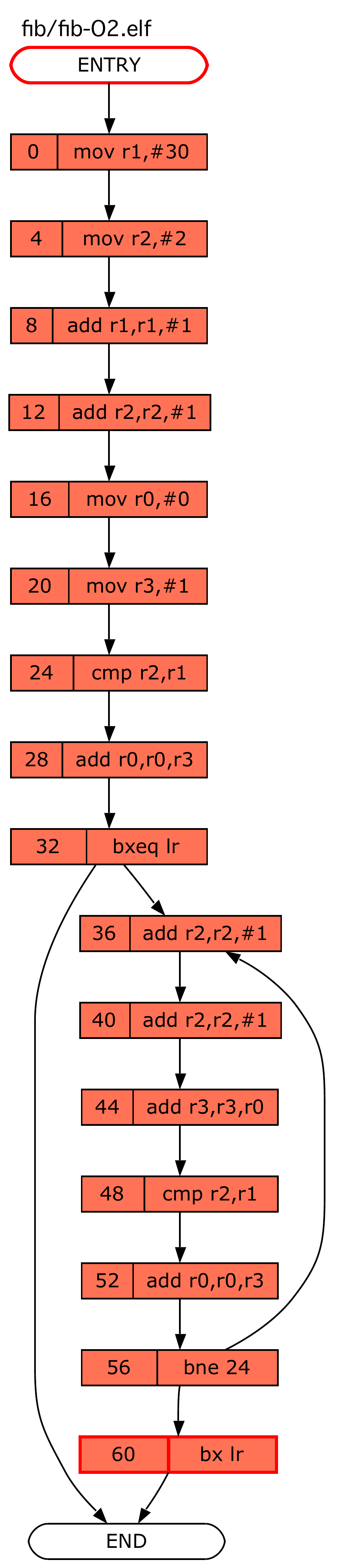}}
  \caption{CFP for Program of Listing~\ref{lis-fibo-30}.}
  \label{fig-cfg-fibo}
\end{figure}

A \emph{slice criterion} $\calC$ for $P$ is a subset $I' \subseteq P$,
and for each instruction $\iota \in I'$ an associated subset of
``variables'' $\calV(\iota) \subseteq \calR \cup \calP \cup \calM$. We
assume that $\calV(\iota)$ is actually included in the set of
registers that the instruction operates on but this is inessential.
For instance, a slice criterion for program $FIBO$ of
Figure~\ref{fig-cfg-fibo} can be instruction $48: \mathtt{cmp\ r2,
  r1}$ and associated set $\{r_1,r_2\}$.

Given input data $d \in \calD$, we write $\run(P,d)$ to denote the
(unique) run of $P$ on $d$.  Let $S \subseteq P$.  The
runs\footnote{Notice that at this stage, $\run(S,d)$ may not be finite
  and program $S$ may not terminate.} of $P$ and $S$ on input data $d
\in \calD$ are denoted
\begin{eqnarray*}
  \run(P,d) & = & s_1 \iota_1 s_2 \iota_2 \cdots s_k \iota_k \cdots s_{n} \iota_{n} s_{n+1} \\
  \run(S,d) & = & s'_1 \iota'_1 s'_2 \iota'_2 \cdots s'_k \iota'_k \cdots s'_{m} \iota'_{m} s'_{m+1} 
\end{eqnarray*}
Define the \emph{projection} $\proj(s,\iota)$ for a pair $(s,\iota)
\in \li$ by:
\[
\proj(s,\iota) =
\begin{cases}
  \varepsilon \text{ if $\iota \not\in S$ } \\
  (\proj_{\calV(\iota)}(s),\iota) \text{ otherwise}.
\end{cases}
\] 
\ie instructions not in the subset $S$ are ignored (replaced by
$\varepsilon$, the empty word) and for instructions in $S$ we keep the
projection on $\calV(\iota)$ of the program state. $\proj$ is extended
in the natural way to traces and we let
$\proj^*(\varepsilon)=\varepsilon$ and
$\proj^*(w.(s,\iota))=\proj^*(w).\proj^*(s,\iota)$ with $s$ a program
state and $\iota \in \li$.

$S$ is a \emph{slice} of $P$ for the slice criterion $\calC$ if it
satisfies, for every input data $d \in \calD$:
\begin{enumerate}
\item if $P$ terminates on input $d$ then $S$ terminates on input
  $d$ and
\item $\proj^*(run(P,d))=\proj^*(run(S,d))$. Notice that by definition
  of $\proj$, all the instructions of $S$ are in $\proj^*(run(S,d))$
  but the projection restricts the set $s'_k$ to the variables in
  $\calV(\iota'_k)$.
\end{enumerate}
In the sequel we recall how to (effectively) compute a slice for $P$
given a slice criterion $\calC$.

\subsection{Prerequisites for Computing a Program Slice}
The computation of a slice is based on an iterative solution of
\emph{dataflow} equations on the set of relevant variables for each
instruction in the CFG of $P$.  The relevant variables for an
instruction are the variables read from/written to by the instruction.
Due to the particular nature of binary programs, the knowledge of
relevant variables for an instruction might not be explicit: consider
the instruction $\textit{foo}=\texttt{str r0,[sp,\#4]}$ again.  This
instruction reads register $r_0$ and writes to the ``variable'' which
is the memory cell at location $\sem{\textit{sp}}+4$. This value is
not known at compile time. The previous instruction writes in the
\emph{stack} which is particular region of the main memory.  Other
instructions like \texttt{16: str r2,[r1, r3 lsl \#2]} might (read or)
write to arbitrary memory cells: in this case the memory cell with
address\footnote{The operator $<<$ denotes the logical shift left.}
$\sem{r_1} + (\sem{r_3} << 2)$.

In our approach we make the following choice: 
\begin{enumerate}
\item we consider that the content of the main memory outside the
  stack is always $\bot$; this means that we do not need to store the
  main memory content into the program state as it is constant.
\item by assumption (A2), every access to a stack value is via
  register \emph{sp}. We use the term \emph{stack reference} for
  instructions that read/write \emph{sp} and main memory reference for
  the other memory accesses.
\item for an instruction which has a stack reference, (\eg in
  \texttt{\texttt{str r0,[sp,\#4]}}), we only know the actual offset
  at runtime.  To define the referenced variables, we introduce a
  variable \emph{stack}. This means that we track the stack content in
  the state of the program and this variable is updated by instruction
  that do stack references.  The previous instruction thus reads $r_0$
  and $sp$ and writes to \emph{stack}.
\end{enumerate}

This enables us to define formally the set of \emph{referenced} and
\emph{defined} variables for each instruction, which is mandatory in
order to compute automatically a slice.

Given instruction $i \in \calI$ the set of read from (REF) and written
to (DEF) variables is given by:
\begin{itemize}
\item for instructions that do not make main memory references or
  stack references, \eg $i=\texttt{add r2,r1,\#1}$ we have
  $REF(i)=\{r_1\}$ and $DEF(i)=\{r_2\}$.
\item for instructions that make stack references, \eg
  $i=\mathtt{push(r_0,r_1)}$, we define $REF(i)=\{r_0,r_1,sp\}$ and
  $DEF(i)=\{sp,stack\}$.
\item for instructions that make main memory references, we assume the
  content of main memory is $\bot$. For an instruction like
  $i=\texttt{str r2,[r1, r3 lsl \#2]}$ we thus have
  $REF(i)=\{r_1,r_2,r_3\}$ and $DEF(i)=\emptyset$. Indeed, even if the
  memory location $\sem{r_1}+(\sem{r_3} << 2)$ is written to, the new
  content of the main memory does not depend on the values of the
  registers and thus we can omit it in the set of $DEF$ variables.
\end{itemize}

\subsection{Step~1: A Slice for Values of Register \emph{sp}.}
\label{sec-step1}
The first task we perform on a binary program $P$ is to compute the
possible values of the stack references (values of \emph{sp}).
 

We can compute the possible values of the stack pointer \emph{sp} for
a given instruction using a slice criterion $\calC$: $\calC$ contains
all the instructions that read/write the variable \emph{sp} \ie all
the instructions \st $DEF(i) \cap \{sp\} \neq \emptyset$ or $REF(i)
\cap \{sp\} \neq \emptyset$ .
 
We compute a slice of $P$ for $\calC$, $S_\calC(P)$ using the standard
definition of \emph{data dependence} and \emph{control dependence}
(see~\cite{Tip95asurvey}).  Once computed, we do a symbolic simulation
of $S_\calC(P)$ and track the values of \emph{sp} encountered for each
instruction in $\calC$.

As we have assumed (A1) that termination does not depend on the input
data, but is guaranteed for each program $P$, the symbolic simulation
of the slice $S_\calC(P)$ on the extended domain terminates as well.
During the course of the symbolic simulation, we track the values of
the \emph{sp} register for each stack reference instruction. At the
end of the simulation, we obtain the set of possible values for
\emph{sp} at each stack reference instruction.  Because of the
property of the slice, and the symbolic simulation in the extended
domain (superset of the set of runs) we can ensure that the set of
\emph{sp} values we obtain for each instruction is a superset of the
set actual values in $P$.

\paragraph*{Limitations.}
The previous approach works correctly if the stack is referenced only
via the register \emph{sp}, assumption (A2). This is ensured by the
API of the compilers from C/C++ to ARM for instance and thus is a
perfectly reasonable assumption.

\smallskip

We can take advantage of the computation performed previsouly to
narrow the $DEF$ and $REF$ variables for each instruction in $P$.
Assume for instruction $\iota=\texttt{4: str r0,[sp,\#4]}$, the set of
possible values of \emph{sp} is $\{12,16\}$. What we know about the
written to variables is more precise than being \emph{somewhere in the
  stack}. We know that variables at index $12$ and $16$ may be written
to, and this instruction does not modify other stack items at other
offsets.  We thus refine the definitions of $REF$ and $DEF$ for
instruction $\iota$ by setting: $REF^*(\iota)=\{r_0,sp\}$ (unchanged
in this case) and $DEF^*(\iota)=\{stack_{12},stack_{16}\}$.  This more
precise definitions will result in smaller subsequent slices as they
will introduce less data dependences in the CFG of a program.


In the sequel, we show how to use
program slicing to compute a WCET-equivalent program.  In the next
section, we also show how to iteratively use program slicing to build
the CFG of arbitrary assembly (unstructured) programs.

\subsection{Step~2: Using Program Slicing to Compute a WCET-Equivalent Program
}
\label{sec-wcet-equiv}
As in the previous subsection, assume that we have the complete CFG of
$P$ (building this CFG is addressed in Section~\ref{sec-cfg}).
Equation~\ref{eq-wcet-mod} implies that for any two programs $P$ and
$P'$,
\begin{equation}\label{eq-3}
  \lang_\bot(P) = \lang_\bot(P') \Longrightarrow \wcet_\bot(P,H) =
  \wcet_\bot(P',H)\mathpunct. 
\end{equation}
%

What we would like to do is to compute such a WCET-equivalent program
$P'$ which (hopefully) operates on a reduced subset of the set of
registers $\calR$ yet contains enough information to generate
$\lang_{\bot}(P)$.

Using the previously computed attributes $REF^*$ and $DEF^*$, we can
compute a WCET-equivalent program using an ad-hoc slice criterion
$\calC'$: $\calC'$ contains ($i$) all the instructions that perform
main memory transactions (including the stack), and each instruction
has the associated set of variables that defines the memory location,
($ii$) all the conditional instructions $\iota$ with associated set of
variables $\calV(\iota) \ni p$ if $p$ is the condition of the
instruction.\footnote{For a conditional memory transaction
  instruction, both the registers that are needed to compute the
  referenced memory address(es) and the condition are in the
  associated variables.}  For instance, instruction $j=(\texttt{16:
  ldr r2,[r1, r3 lsl \#2]})$ is in $\calC'$ and we have to track the
values of registers $\calV(j)=\{r_1,r_3\}$ since the memory address is
defined by $r_1$ and $r_3$.  For an instruction like
$l=(\texttt{12:addle r1,r2,\#1})$ we set $\calV(l)=\{le\}$.

Let $S_{\calC'}(P)$ be the slice computed using the criterion
$\calC'$.  What we want is to generate the language $\lang_\bot(P)$
using the slice.  For each instruction $\iota \in P$ we define a
corresponding abstracted  $\alpha(\iota)$ as follows:
\begin{itemize}
\item if $\iota \in P \cap S_{\calC'}(P)$ then $\alpha(\iota)=\iota$;
\item for the other instructions $\iota \in P \setminus
  S_{\calC'}(P)$, $\alpha(\iota)=\iota_{nop}$ where $\iota_{nop}$
  denote the instruction with the exact same syntax as $\iota$ but the
  semantics of $\iota_{nop}$ is $\sem{pc} :=\sem{pc} + 4$.  As the
  syntax of $\iota$ is identical to $\iota_{nop}$, this alos preserves
  the $REF$ and $DEF$ attributes.
\end{itemize}
We let $\alpha(P)$ be the program that comprises of instructions
$\alpha(\iota), \iota \in P$.  Notice that $alpha$ is one-to-one
mapping and thus we can consider $\alpha^{-1}$ when needed.

We can now prove the following Lemmas:
\begin{lemma}\label{lem-1}
  Let $\varrho=s_0^\bot \iota_0 s_1 \iota_1 s_2 \iota_2 \ \cdots \
  \iota_{k-1} s_k$ be a run of $P$. The run $\varrho'={s'}_0^\bot
  \alpha(\iota_0) s'_1 \alpha(\iota_1) s'_2 \alpha(\iota_2)$ $ \cdots
  \ \alpha(\iota_{k-1}) s'_k $ is in $\alpha(P)$ and
  $\tr(\varrho)=\tr(\varrho')$.
\end{lemma}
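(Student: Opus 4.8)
The plan is to proceed by induction on $k$, the length of the run, and to show simultaneously that (a) $\varrho'$ is a genuine run of $\alpha(P)$ — i.e. each step respects the extended symbolic semantics of $\alpha(P)$ — and (b) the program counter stays synchronised, $s'_j(pc) = s_j(pc)$ for all $j$, and (c) the trace triples agree, $\tr(\varrho)_j = \tr(\varrho')_j$. The key observation that makes this work is that $\alpha$ is one-to-one and, by construction, leaves the \emph{syntax} of every instruction unchanged: $\alpha(\iota)$ is either $\iota$ itself or $\iota_{nop}$, which has the exact same syntax as $\iota$ and only differs in that its semantics is $\sem{pc} := \sem{pc}+4$. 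Hence $\alpha(\iota)$ and $\iota$ have the same label $\ell$, reference the same set $A$ of memory addresses when evaluated in corresponding states (the $REF$/$DEF$ attributes are preserved), and — crucially — the same conditional guard, so the Boolean component $d$ of the trace triple is the same whenever the relevant predicates agree in the two states.

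First I would set up the induction: the base case $k=0$ is immediate since ${s'}_0^\bot = s_0^\bot$ agree everywhere, so in particular on $pc$ and on all registers and predicates. For the inductive step, assume $s'_j$ and $s_j$ agree on $pc$ and on every variable in $REF(\iota_j)$ that $\iota_j$ actually reads to determine its control behaviour — here I would want to be a little careful and instead carry the \emph{stronger} hypothesis that $s'_j$ and $s_j$ agree on \emph{all} of $\calR \cup \calP \cup \calM$, which is in fact what happens: the only instructions whose semantics $\alpha$ changes are those outside $S_{\calC'}(P)$, but such an instruction $\iota_j$ is precisely one that does no main-memory/stack transaction and is not a conditional instruction — wait, that is not quite the property of $S_{\calC'}$. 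I would need to re-examine this: $S_{\calC'}(P)$ need not contain \emph{every} instruction that writes a register, so $\alpha$ can turn a register-writing instruction into a nop, and then $s'_{j+1}$ and $s_{j+1}$ will genuinely differ on that register. So the honest invariant is weaker: $s'_j$ and $s_j$ agree on $pc$ and on $\calV(\iota)$ for the instructions $\iota \in \calC'$ — equivalently, they produce the same projected traces — and this is exactly the slicing guarantee (item 2 in the definition of a slice, $\proj^*(run(P,d)) = \proj^*(run(S,d))$) lifted to the symbolic extended-domain semantics.

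Here is where the main obstacle lies, and I would structure the proof around it. The statement "$\varrho'$ is a run of $\alpha(P)$" requires that the \emph{control flow} of $\alpha(P)$ from $s_0^\bot$ follows exactly the same sequence of labels $\ell_0, \ell_1, \dots$ as $\varrho$ does in $P$. This is true because: (i) non-conditional, non-branching instructions advance $pc$ by $4$ under both semantics; (ii) branching instructions $\iota_j$ that are in $S_{\calC'}(P)$ are left untouched by $\alpha$ and, by the data/control-dependence closure used to build the slice, all registers/predicates they depend on are themselves tracked, so they branch identically; (iii) a conditional instruction is, by definition of $\calC'$, always in the slice criterion and hence in $S_{\calC'}(P)$, so its guard $d_j$ is evaluated on the same predicate value in $s'_j$ as in $s_j$. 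Combining (i)--(iii) gives $s'_{j+1}(pc) = s_{j+1}(pc)$, so the labelled instruction fired at step $j+1$ in $\alpha(P)$ is $\alpha(\iota_{j+1})$, closing the induction on the run structure. For the trace equality, at each step the triple $(\alpha(\iota_j), A'_j, d'_j)$ satisfies $\alpha(\iota_j)$ and $\iota_j$ have equal labels and mnemonics (so equal first component, modulo identifying $\alpha(\iota_j)$ syntactically with $\iota_j$), $A'_j = A_j$ because the address-defining registers are tracked (they lie in $\calV(\iota_j)$ whenever $\iota_j$ does a memory transaction, again by definition of $\calC'$, hence agree in $s'_j$ and $s_j$), and $d'_j = d_j$ by (iii). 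Therefore $\tr(\varrho') = \tr(\varrho)$. I expect the genuinely delicate point to be justifying that the slice criterion $\calC'$, together with the standard data/control-dependence closure, is rich enough that \emph{every} quantity influencing either the control flow or a trace triple is faithfully reproduced — this is really an appeal to the correctness of the slicing construction recalled from~\cite{Tip95asurvey}, transported to the extended symbolic semantics via property $(\pi)$, and I would state it as the one nontrivial ingredient and cite it rather than reprove slicing correctness from scratch.
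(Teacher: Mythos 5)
Your proof is correct and follows essentially the same route as the paper's: induction on the run length with an invariant asserting trace equality together with agreement of $s_j$ and $s'_j$ on $pc$ and on the sliced variables, followed by a case analysis showing the three components of each trace triple coincide (syntactic identity of $\alpha(\iota)$ and $\iota$, address-defining registers being in $\calV(\iota)$ by the definition of $\calC'$, and conditions being in the slice). Your explicit unpacking of why the control flow stays synchronised is slightly more detailed than the paper's one-line appeal to the isomorphism of the CFGs of $P$ and $\alpha(P)$, and your self-correction to the weaker (projection-based) invariant is exactly the induction hypothesis the paper uses.
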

\begin{proof}
  We prove the Lemma by induction. The induction hypothesis (IH) is:
  for runs of length $k$, $\tr(\varrho)=\tr(\varrho')$ and
  $s'_k=\proj_{\calV(\iota_{k})}(s_k)$ if $\iota_k$ is the instruction
  following $\iota_{k-1}$ and $\iota_k$ is in the slice, and
  $s'_k=\proj_{\{pc\}}(s_k)$ otherwise.  The Lemma is true for runs of
  length $0$.  Assume we have a run of length $k+1$ \ie
  $\varrho=s_0^\bot \iota_0 s_1 \iota_1 s_2 \iota_2 \ \cdots \
  \iota_{k-1} s_k \iota_k s_{k+1}$.  First notice that instruction
  $\alpha(\iota_k)$ is a successor of $\alpha(\iota_{k-1})$ as the CFG
  of $\alpha(P)$ is isomorphic to $CFG(P)$.  We can compute the triple
  $(x,A,d)$ and $(x',A',d')$ added to the trace of $\varrho$ and
  $\varrho'$ after instructions $\iota_k$ and $\alpha(\iota_k)$:
  \begin{itemize}
    \item the first component of the triple is the same as $alpha(\iota_k)$ and $\iota_k$
      have exactly the same syntax (and location); hence $x=x'$.
    \item for the second component, memory references, there are two cases:
      \begin{itemize}
      \item either $\iota_k$ does not make any memory transfer and references only registers.
        The same applies to $\alpha(\iota_k)$ and the second component is the empty set;
      \item or $\iota$ has memory references. In this case, the
        registers that generate the memory references are in the
        slice, and thus the values at $s_k$ and $s'_k$ coincide by the
        ``projection'' property of the slice.  
      \end{itemize}
      In each case $A=A'$.
    \item the third components $d,d'$ are the values of the conditions
      of the instruction $\iota_k$ and $\alpha(\iota_k)$. If the
      instruction $\iota_k$ is unconditional, $d=\true$ and
      $d'=\true$.  Otherwise, the two instructions have the condition
      $c$. As $\iota_k$ is conditional, the condition is in the slice
      (by definition of the slice) and thus $s_k(c)=s'_k(c)$.  Hence
      $d=d'$. 
  \end{itemize}
  This proves that $\tr(\varrho)=\tr(\varrho')$ and completes the
  proof.  \qed
\end{proof}

\begin{lemma}\label{lem-2}
  Let $\varrho'={s'}_0^\bot \iota'_0 s'_1 \iota'_1 s'_2 \iota'_2$ $
  \cdots \iota_{k-1} s'_k$ be a run of $\alpha(P)$.  There is a run
  $\varrho=s_0^\bot \iota_0 s_1 \iota_1 s_2 \iota_2 \cdots \iota_{k-1}
  s_k$ of $P$ with $\iota'_i=\alpha(\iota_i)$ and
  $\tr(\varrho)=\tr(\varrho')$.
\end{lemma}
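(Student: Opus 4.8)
The plan is to prove Lemma~\ref{lem-2} as the mirror image of Lemma~\ref{lem-1}, exploiting that $\alpha$ is a one-to-one map whose inverse $\alpha^{-1}$ sends each $\iota_{nop}$ back to the unique instruction of $P$ having the same syntax and the same location. Given the run $\varrho'={s'}_0^\bot\,\iota'_0\,s'_1\,\iota'_1\cdots\iota'_{k-1}\,s'_k$ of $\alpha(P)$, I would set $\iota_i:=\alpha^{-1}(\iota'_i)$ and let $\varrho=s_0^\bot\,\iota_0\,s_1\,\iota_1\cdots\iota_{k-1}\,s_k$ be the run of $P$ obtained by applying the concrete semantics $\isem{\iota_i}$ step by step from $s_0^\bot$, resolving the nondeterministic choices at $\bot$-conditions so as to agree with $\varrho'$. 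The two things to establish are then: (1) this sequence is a legitimate run of $P$, i.e.\ $s_i(pc)$ is always the address of $\iota_i$ and $s_{i+1}=\isem{\iota_i}(s_i)$; and (2) $\tr(\varrho)=\tr(\varrho')$.

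Both are proved simultaneously by induction on $k$, with the same invariant as in Lemma~\ref{lem-1}: after $k$ steps, $\tr(\varrho)=\tr(\varrho')$, and $s_k$ and $s'_k$ agree on $\calV(\iota_k)$ (and always on $pc$). For the inductive step I would read off the three components of the triple added after $\iota_k$ / $\iota'_k=\alpha(\iota_k)$, exactly as in Lemma~\ref{lem-1}. The first component is the same because $\iota_k$ and $\alpha(\iota_k)$ have identical syntax and location. For the memory-reference component, either $\iota_k$ touches no memory (then neither does $\alpha(\iota_k)$ and the set is $\emptyset$), or it does, in which case $\iota_k\in S_{\calC'}(P)$, its address registers lie in $\calV(\iota_k)$, and by the projection property of the slice (extended to the symbolic $\bot$-semantics as in Lemma~\ref{lem-1}) $s_k$ and $s'_k$ assign them the same values, so the address sets coincide. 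For the condition component, an unconditional $\iota_k$ contributes $d=d'=\true$; a conditional $\iota_k$ is in $S_{\calC'}(P)$ by construction of $\calC'$, its condition $p$ lies in $\calV(\iota_k)$, and since the reaching definitions of $p$ are transitively pulled into the slice, $s_k(p)=s'_k(p)$, whence $d=d'$. That the control flow stays synchronized — so that $\iota_{k+1}=\alpha^{-1}(\iota'_{k+1})$ is indeed the instruction sitting at address $s_{k+1}(pc)$ — follows because $pc$ is updated identically on both sides: for straight-line steps by $\sem{pc}:=\sem{pc}+4$, and for branch steps because, as already used in the proof of Lemma~\ref{lem-1}, $CFG(\alpha(P))$ is isomorphic to $CFG(P)$, so branch targets correspond and conditional branches resolve the same way since their conditions coincide.

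The step I expect to be the main obstacle is precisely this last point: justifying that the \emph{forced, deterministic} replay of $P$ cannot drift away from $\varrho'$. The delicate subtlety is that $P$'s replay may carry concrete values in registers and predicates where $\alpha(P)$'s run carries $\bot$ (the $\iota_{nop}$ steps compute nothing), so one must check that no conditional outcome that $\varrho'$ took on an unknown condition is in fact \emph{determined} the other way in $\varrho$. This is ruled out because every predicate controlling a conditional instruction, together with (transitively) the instructions that may define it, belongs to $S_{\calC'}(P)$ and is therefore not replaced by a nop; hence such a predicate has the very same (possibly $\bot$) value in $s_k$ and $s'_k$, and every nondeterministic choice made by $\varrho'$ remains available to $\varrho$. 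The remaining discrepancies between $s_k$ and $s'_k$ concern only variables outside $\calV(\iota_k)$, which by the above never influence the trace. Concatenating the per-step conclusions yields $\tr(\varrho)=\tr(\varrho')$ with $\iota'_i=\alpha(\iota_i)$, which completes the proof. Together with Lemma~\ref{lem-1} this gives $\lang_\bot(P)=\lang_\bot(\alpha(P))$, and hence the WCET-equivalence of $P$ and $\alpha(P)$ via~\eqref{eq-3}.
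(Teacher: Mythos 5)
Your proposal is correct and follows essentially the same route as the paper: build $\varrho=\alpha^{-1}(\varrho')$, observe that every branching point of $P$ is conditional and hence in the slice so that the condition values (possibly $\bot$) coincide in $s_k$ and $s'_k$, and conclude that the control flow stays synchronized and the traces agree. The paper's own proof is terser — it states the uniqueness of the inter-slice-instruction paths in $CFG(P)$ and then simply invokes Lemma~\ref{lem-1} for the trace equality — whereas you re-run the induction and explicitly discharge the one delicate point (that a condition left nondeterministic in $\varrho'$ cannot be concretely determined the other way in $\varrho$), which is a worthwhile elaboration rather than a divergence.
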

\begin{proof}
  The proof relies on the following fact: every instruction in $P$
  that has more than one successor is conditional and thus is in the
  slice.\footnote{We omit here the case of \emph{switch} statements
    but they are processed in a similar way and this is implemented in
    our tool.}
  Consequently, given two instructions $\iota'_j=\alpha(\iota_j)$ and
  $\iota'_{j+1}=\alpha(\iota_{j+1})$ in the slice, there is a unique
  sequence (with no loop) of instructions in $CFG(P)$ between
  $\iota_j$ and $\iota_{j+1}$.  This shows that there is a (unique)
  run in $P$ defined by $\iota_j=\alpha^{-1}(\iota'_j)$.
  Using the result of Lemma~\ref{lem-1} completes the proof. 
\qed
\end{proof}

By combining Lemma~\ref{lem-1} and~\ref{lem-2} we obtain:
\begin{theorem}
  $\wcet_\bot(P,H)=\wcet_\bot(\alpha(P),H)$.
\end{theorem}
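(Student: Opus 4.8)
The plan is to push Lemmas~\ref{lem-1} and~\ref{lem-2} up from the level of single runs to the level of trace languages, and then invoke the implication~(\ref{eq-3}) with $P'=\alpha(P)$. Recall that $\lang_\bot(P)=\lang_{\{s_0^\bot\}}^F(P)$ collects the traces of the runs of $P$ that start in $s_0^\bot$ and end in a state of $F$, and that $F$ is characterised by the value of register $pc$.

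First I would check that the run correspondence supplied by the two lemmas transports the acceptance condition ``starts in $s_0^\bot$, ends in $F$''. The bijection $\iota\mapsto\alpha(\iota)$ leaves the \emph{location} of every instruction untouched, so $CFG(\alpha(P))$ is isomorphic to $CFG(P)$ via a label-preserving map, and the designated initial and exit nodes are matched. Hence, for a run $\varrho$ of $P$ from $s_0^\bot$ and the run $\varrho'$ of $\alpha(P)$ built from it by Lemma~\ref{lem-1} (and, conversely, for the run of $P$ produced by Lemma~\ref{lem-2}), the two runs visit the same sequence of instruction locations and, in particular, have the same final $pc$ value: the proof of Lemma~\ref{lem-1} actually gives $s'_k=\proj_{\calV(\iota_k)}(s_k)$ or $s'_k=\proj_{\{pc\}}(s_k)$, both of which determine $s_k(pc)$. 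Consequently $\varrho$ ends in $F$ iff $\varrho'$ does, and both runs start in the common, $pc$-determined initial state.

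Next I would deduce language equality. For $w\in\lang_\bot(P)$, write $w=\tr(\varrho)$ with $\varrho$ an accepting run of $P$ from $s_0^\bot$; Lemma~\ref{lem-1} yields a run $\varrho'$ of $\alpha(P)$ with $\tr(\varrho')=\tr(\varrho)=w$, which by the previous paragraph is accepting, so $w\in\lang_\bot(\alpha(P))$. Symmetrically, Lemma~\ref{lem-2} gives $\lang_\bot(\alpha(P))\subseteq\lang_\bot(P)$ --- here one also uses the fact recalled in the proof of Lemma~\ref{lem-2} (every branching instruction of $P$ is conditional, hence in the slice, together with assumption (A4)) to ensure that the loop-free path of $CFG(P)$ reconstructed between two consecutive slice instructions is unique and well defined. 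Thus $\lang_\bot(P)=\lang_\bot(\alpha(P))$. Finally, applying the implication~(\ref{eq-3}) with $P'=\alpha(P)$ gives $\wcet_\bot(P,H)=\wcet_\bot(\alpha(P),H)$, which is the claim.

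I expect the only delicate point to be the bookkeeping in the second paragraph: Lemmas~\ref{lem-1} and~\ref{lem-2} are stated for arbitrary (prefix) runs and speak only about equality of traces, so acceptance (start in $s_0^\bot$, end in $F$) must be checked separately rather than read off directly. Once one notes that $\alpha$ preserves instruction locations and that the two control-flow graphs are label-isomorphic, this is immediate, and the remainder is just packaging the two inclusions and quoting~(\ref{eq-3}).
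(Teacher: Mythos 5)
Your proposal is correct and follows essentially the same route as the paper: Lemmas~\ref{lem-1} and~\ref{lem-2} give the two inclusions establishing $\lang_\bot(P)=\lang_\bot(\alpha(P))$, and Equation~\ref{eq-3} then yields the equality of the WCETs. The extra care you take in checking that the run correspondence preserves the acceptance condition (start in $s_0^\bot$, end in $F$, both determined by $pc$ and the location-preserving bijection $\alpha$) is a detail the paper's one-line proof leaves implicit, but it does not change the argument.
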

\begin{proof}
  Lemmas~\ref{lem-1} and~\ref{lem-2} imply that
  $\lang_\bot(P)=\lang_\bot(\alpha(P))$ and by Equation~\ref{eq-3},
  the result follows. \qed
\end{proof}

When we do program slicing, many operations on registers are avoided
if they do not influence the control flow. The result is that
$\alpha(P)$ generates less states than $P$: assume register $r_4$ is
never used in $\alpha(P)$ but used in $P$, then all the states of $P$
that differs only on $r_4$ are collapsed into the same state.  This
also means that the automaton $Aut(\alpha(P))$ that generates
$\lang_\bot(\alpha(P))$ will have less states than $Aut(P)$.  Quite
often, some registers are not used at all or do not influence the
control flow and this reduces drastically the number of states in
$Aut(\alpha(P))$.

An example of a slice is given in Figure~\ref{fig-slice-fibo0}, for
the Fibonacci program $FIBO_0$ compiled with option $O0$: only $12$
instructions out of $40$ need be really simulated and the variables in
the sliced program are $\{pc,r_0,r_2,r_3\}$ and $3$ stack values.

\begin{figure}[thbtp]
  \centering
  \includegraphics[scale=0.45]{\fname{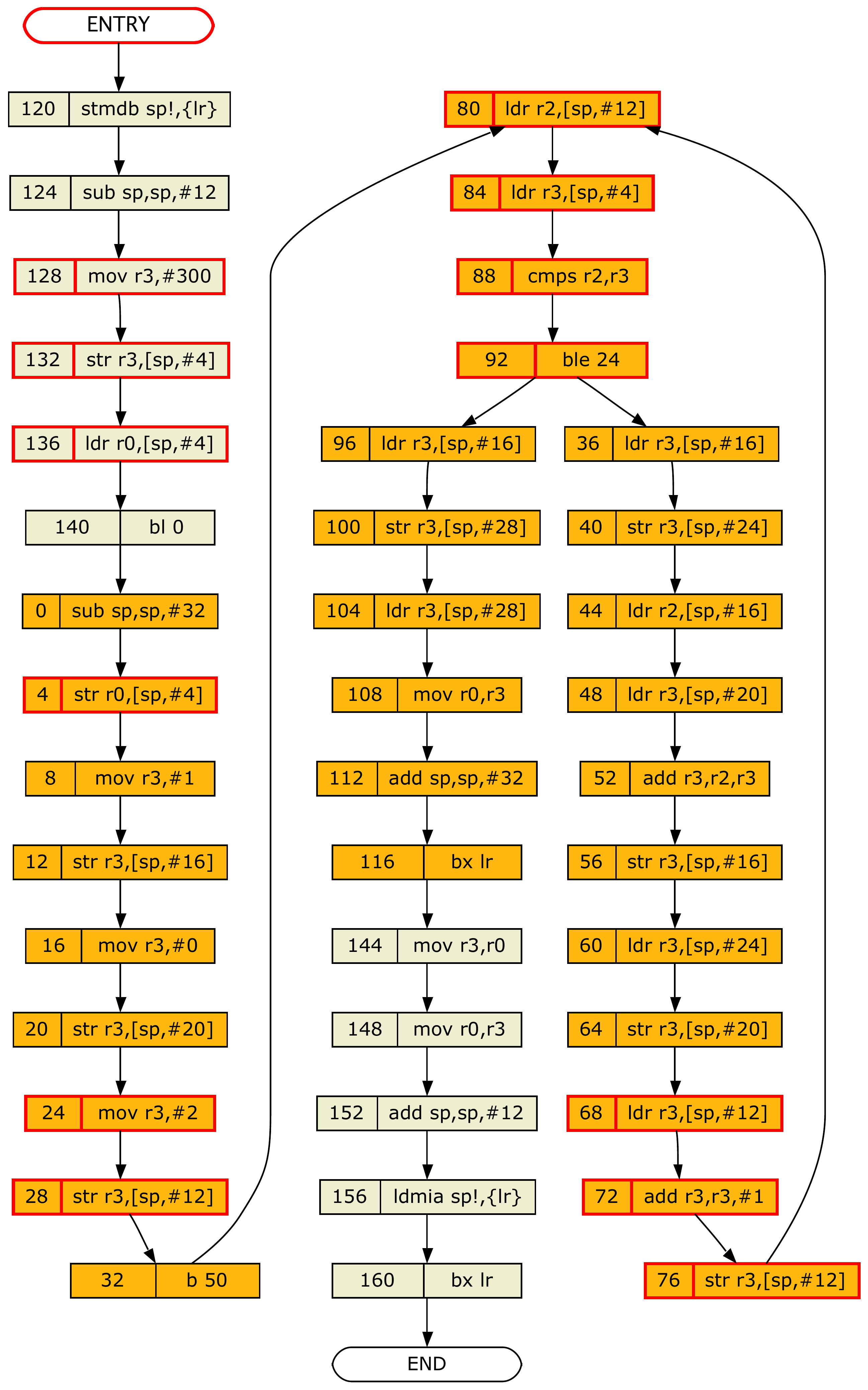}}
  \caption{WCET-equivalent Slice for Program $FIBO_0$.}
  \label{fig-slice-fibo0}
\end{figure}

Another advantage of slicing is that we do not need to do \emph{loop
  unrolling} because the registers and instructions that control the
loop bounds are automatically preserved by the slice.

In Table~\ref{tab-results}, Section~\ref{sec-experiments}, column
\emph{``Abs''} ($a/b$) gives, for each program $P$, the number of
nodes $a$ for which the \emph{simulation} of an instruction is needed
compared to the total number of nodes $b$ of $Aut(P)$.

This reduction has not only an effect on the state space (reduction of
the number of paths explored) but also on the size of the
representation of each state of $Aut(\alpha(P))$.

In the next section, we describe how we automatically compute the CFG
of a program.


\section{Computation of the CFG}
\label{sec-cfg}
To compute the CFG of a program, we iterate two phases:
\begin{enumerate}
\item Slice. We slice a partial CFG in order to compute the 
  dynamically computed branch targets; we simulate the sliced
  program to determine these targets.
\item Expand: having determined the dynamically computed branch
  targets, we expand the partial CFG and repeat Step~1.
\end{enumerate}
When the iteration terminates we have the CFG of the program.  We
limit the scope of our tool to non recursive programs, and this
ensures that the previous iterative computation terminates.

We describe the process on an example of a Fibonnaci program $FIBO_0$
(compiled with option $-O0$) given in Listing~\ref{lis-fibo0}.  This
program is composed of two functions, \emph{main} and \emph{fib}:
\emph{main} calls \emph{fib} and at the end, \emph{fib} returns.  The
computation would go like this: after instruction $8c$ in main,
\emph{fib} starts as $8c$ is ``(b)ranch to $0$ and save return address
to (l)ink register \emph{lr}''.  If at some point, instruction $74$ in
\emph{fib} is reached, \emph{lr} should contain the return address in
\emph{main} \ie $90$.  It should also be noticed that the first
instruction in \emph{main} is to save on the stack, the return of the
caller: \texttt{push(lr)}.  This is used at the end of \emph{main} to
return to the caller's next instruction when the statement \texttt{bx
  lr} (``branch to the content of \emph{lr}'') is performed right
after popping the value of \emph{lr}.

\begin{center}
  \begin{minipage}[htb]{0.55\linewidth}
\begin{lstlisting}[language=AssemblerARM9,numbers=none,caption={$FIBO_0$},label={lis-fibo0},basicstyle={\fontsize{8}{9}\selectfont\ttfamily}]
00000000 <fib>:
   0:   e24dd020        sub     sp, sp, #32
   4:   e58d0004        str     r0, [sp, #4]
   8:   e3a03001        mov     r3, #1
  12:   e58d3010        str     r3, [sp, #16]
  16:   e3a03000        mov     r3, #0
  20:   e58d3014        str     r3, [sp, #20]
  24:   e3a03002        mov     r3, #2             
  28:   e58d300c        str     r3, [sp, #12]  
  32:   ea00000a        b       50  <fib+0x50>
  36:   e59d3010        ldr     r3, [sp, #16] 
  40:   e58d3018        str     r3, [sp, #24] 
  44:   e59d2010        ldr     r2, [sp, #16]
  48:   e59d3014        ldr     r3, [sp, #20]
  52:   e0823003        add     r3, r2, r3
  56:   e58d3010        str     r3, [sp, #16]
  60:   e59d3018        ldr     r3, [sp, #24]
  64:   e58d3014        str     r3, [sp, #20]
  68:   e59d300c        ldr     r3, [sp, #12] 
  72:   e2833001        add     r3, r3, #1    
  76:   e58d300c        str     r3, [sp, #12] 
  80:   e59d200c        ldr     r2, [sp, #12] 
  84:   e59d3004        ldr     r3, [sp, #4]  
  88:   e1520003        cmp     r2, r3        
  92:   dafffff0        ble     24  <fib+0x24> 
  96:   e59d3010        ldr     r3, [sp, #16]
 100:   e58d301c        str     r3, [sp, #28]
 104:   e59d301c        ldr     r3, [sp, #28]
 108:   e1a00003        mov     r0, r3
 112:   e28dd020        add     sp, sp, #32
 116:   e12fff1e        bx      lr

00000078 <main>:
 120:   e52de004        push    {lr}   ; [stmdb sp!,{lr}]  
 124:   e24dd00c        sub     sp, sp, #12
 128:   e3a03f4b        mov     r3, #300      
 132:   e58d3004        str     r3, [sp, #4] 
 136:   e59d0004        ldr     r0, [sp, #4] 
 140:   ebffffdb        bl      0   <fib>
 144:   e1a03000        mov     r3, r0
 148:   e1a00003        mov     r0, r3
 152:   e28dd00c        add     sp, sp, #12
 156:   e49de004        pop     {lr}   ; [ldmia sp!,{lr}]  
 160:   e12fff1e        bx      lr

\end{lstlisting}
  \end{minipage}
\end{center}

If we perform a first \emph{unfolding} of the program, we obtain a
partial CFG depicted in Fig.~\ref{fig-fibo0-cfg1}.  In this CFG, the
successor of instruction $116$ is unknown and thus the unfolding has a
terminal node at this location.  To compute the successor of this
insruction we slice the partial CFG with the slice criterion
$\calC''=\{116\}$ and $\calV(116)=\{lr\}$.  The sliced program is
composed of the red nodes \ie instructions $140$ and $116$.
Simulating this two-instruction program we get the possible value of
\emph{lr} at instruction $116$ which is $144$.

We can then extend the partial CFG to obtain the graph depicted on
Fig.~\ref{fig-fibo0-cfg2}.  We slice again to compute the successor of
instruction $160$: the new slice (6 nodes) is depicted on
Fig.~\ref{fig-fibo0-cfg2} with the red nodes. We should here find that
\emph{main} handles the control back to its caller.  To recognise this
situation we use the following trick: we assume that before the first
instruction of the program is performed, $\sem{lr}=\beta$ where
$\beta$ is a special value that cannot correspond to any valid
instruction.  We can take for example $\beta=3$.  When we compute a
target which is $\beta$ we know that we have reached the end of the
program because this returns to the caller. This situation occurs when
we simulate the second slice and after the instruction \texttt{160:bx
  lr} the program return to the caller.

The complete CFG for $FIBO_0$ is given in Fig.~\ref{fig-fibo0-final}.

\begin{figure}[thbtp]
  \centering
  \includegraphics[scale=0.45]{\fname{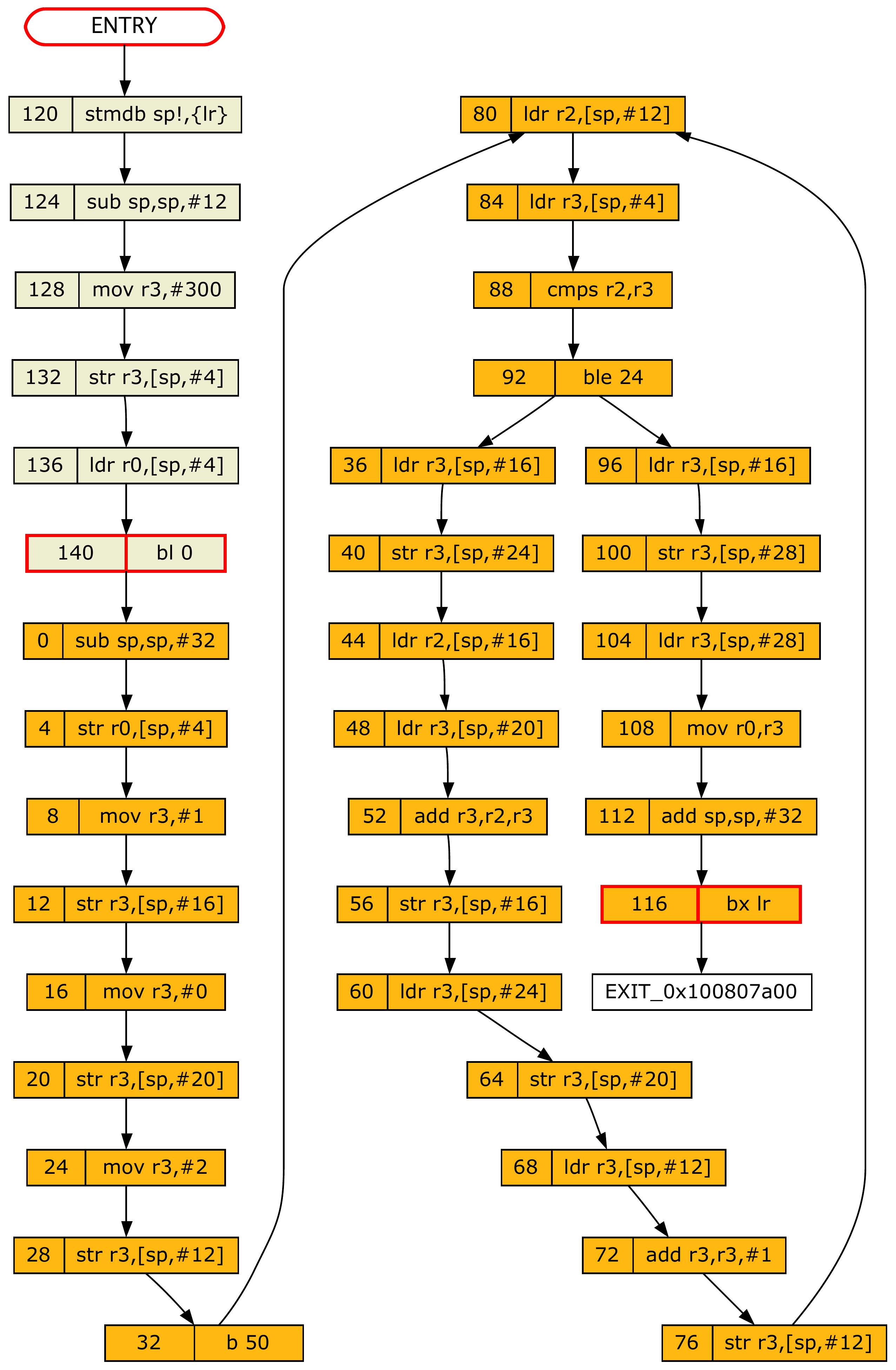}}
  \caption{First Unfolding of the CFG of $FIBO_0$.}
  \label{fig-fibo0-cfg1}
\end{figure}

\begin{figure}[thbtp]
  \centering
  \includegraphics[scale=0.45]{\fname{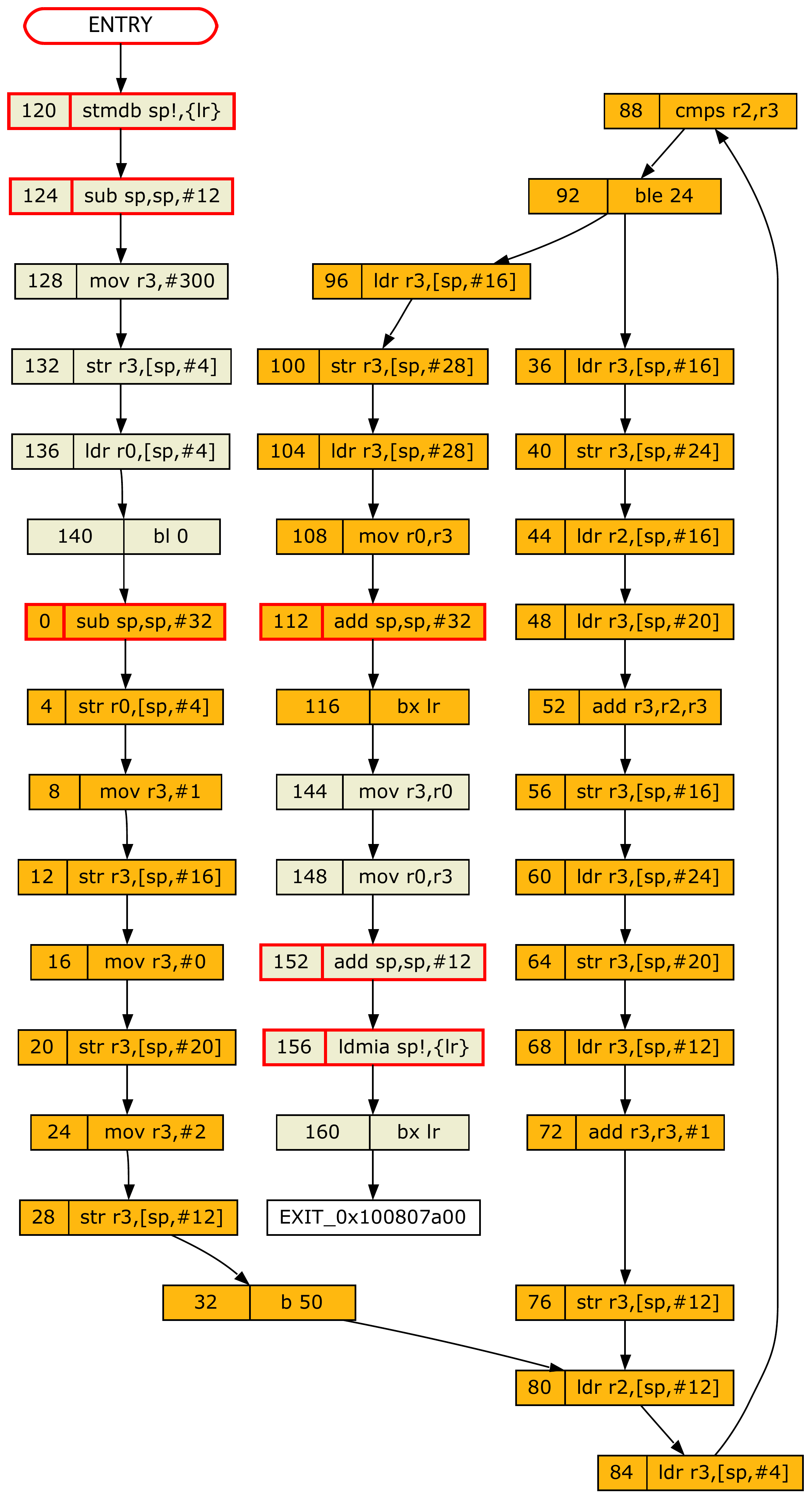}}
  \caption{Second Unfolding of the CFG of $FIBO_0$.}
  \label{fig-fibo0-cfg2}
\end{figure}

\begin{figure}[thbtp]
  \centering
  \includegraphics[scale=0.45]{\fname{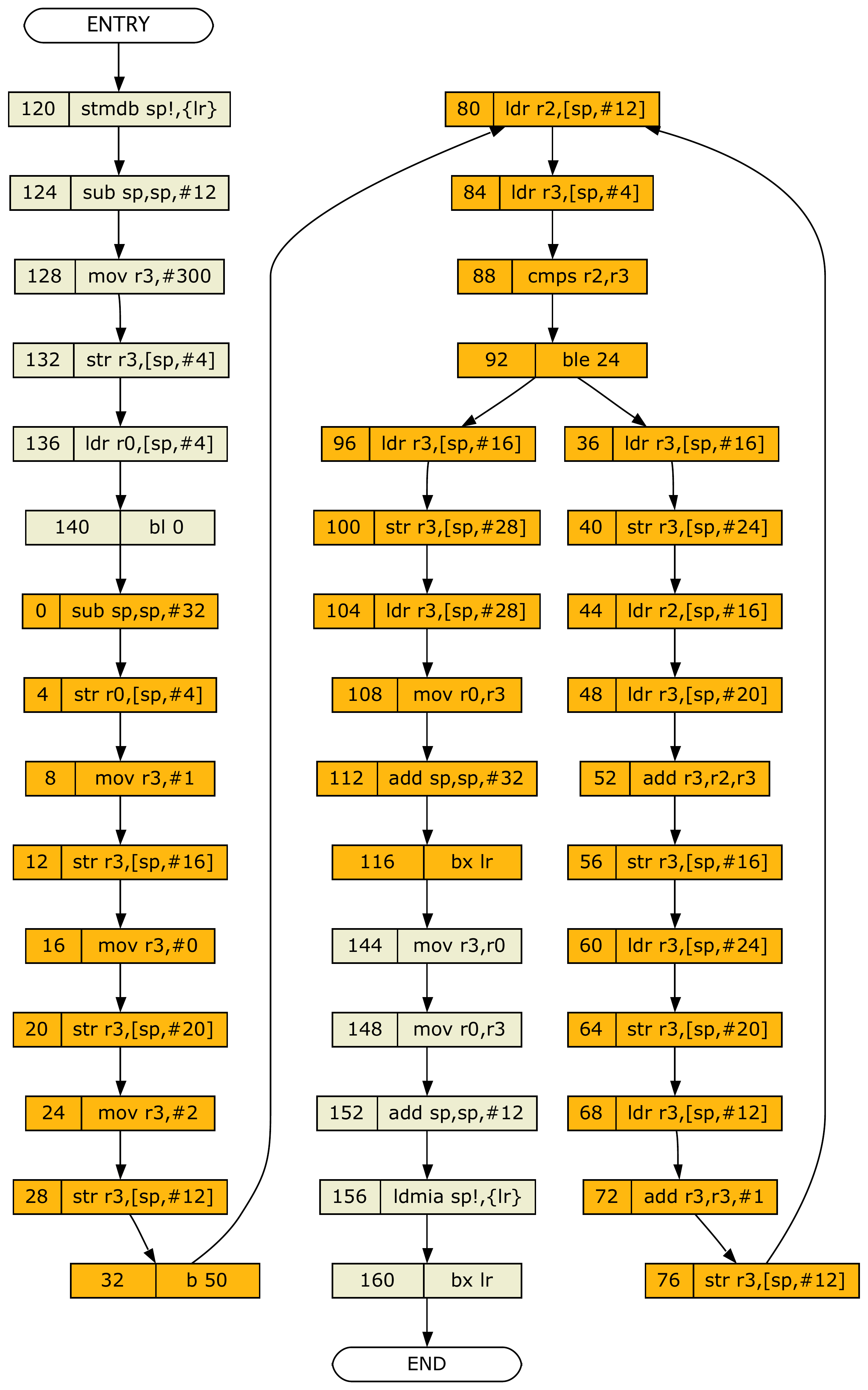}}
  \caption{The Complete CFG of $FIBO_0$.}
  \label{fig-fibo0-final}
\end{figure}

The computation of the possible values of \emph{sp} described in
Section~\ref{sec-step1} is actually performed when computing the
CFG. When we have computed the final CFG we also have the possible
values of register \emph{sp} (at the stack reference node) and we can
directly proceed to Step~2 (section~\ref{sec-wcet-equiv}) to compute a
WCET-equivalent program.

The previous process always converge to the CFG of a program because
we assume that the programs do not contain recursive calls (assumption
(A4)).  In the worst case, the slices we need to simulate in the
iterative compuation are the full CFGs obtained at each step.


\section{Hardware Model}\label{sec-hw}\label{sec-hw-model}


%
In this section we present some features of the formal models (timed
automata) of the hardware.  The automata are given using the \uppaal
syntax: \emph{initial} locations are identified by double circles,
\emph{guards} are green, synchronization \emph{signals} (channels) are
light blue and \emph{assignments} are dark blue. A \emph{C} in a
location means \emph{committed}: when an automaton enters a committed
location, it cannot be interrupted and proceeds immediately to one of
the successors of this location (the guards determine the transitions
that can be taken).  The \uppaal models are available from
\url{http://www.irccyn.fr/franck/wcet}.

\subsection{Main Memory}
\begin{figure}[thbtp]
  \centering
  \includegraphics[scale=1.2]{\fname{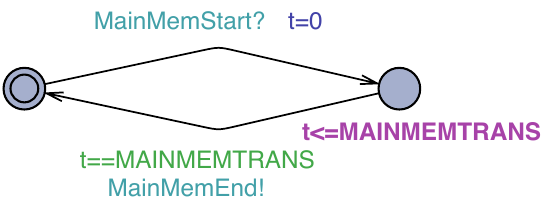}}
  \caption{Main Memory TA}
\label{fig-mainmem}
\end{figure}
The main memory model is a very simple two-location automaton
(Fig.~\ref{fig-mainmem}).  When a memory transfer is required, signal
\texttt{MainMemStart?} is received and clock $t$ is reset. After a
delay of \emph{MAINMEMTRANS} the transfer is completed and signal
\texttt{MainMemEnd!} is issued.  Main memory transfers are triggered
by either the instruction or data cache and accesses to main memory is
serialized.

\subsection{Caches}
The model of the instruction cache is given in Fig.~\ref{fig-icache}.
The state of the cache contains an array ($64\times 8$ array) to
record the addresses stored in the cache and whether a line is dirty
or not.

The instruction cache is simpler than the data cache because
no write can occur in this cache, so a line cannot be dirty.
After the initialization of the cache (initial state of the cache
by the function
\texttt{initCache()}), the automaton is ready for receiving the signal
\texttt{CacheReadStart[num]?}.  This signal will be triggered by the
fetch stage of the pipeline Fig.~\ref{fig-exec-decode-stage}.  The
memory address to read is $m$. If $m$ is in the cache (function
\texttt{is\_in(m)} returns \true), there is no need for a memory
transfer and variable \texttt{PMT} (Pending Memory Transfers) is
assigned $0$.  Otherwise function \texttt{insert(m)} inserts $m$ in
the cache and returns the number of memory transfers to be performed:
for the instruction cache it is always $1$ because a line cannot be
\emph{dirty} (see Section~\ref{sec-archi}) but for the data cache it
can be either one or $2$ if a dirty line has to be saved from the
cache.  As soon as the memory transfer is completed (\texttt{PMT}=0)
transition \texttt{Hurry!} is fired (it is \emph{urgent}). Then, after
\emph{CACHE\_SPEED} time units (value is $1$ for the our testbed) the
read request completes and the signal \texttt{CacheReadEnd[num]!}  is
issued.

\begin{figure}[hbtp]
  \centering
  \includegraphics[scale=1.3]{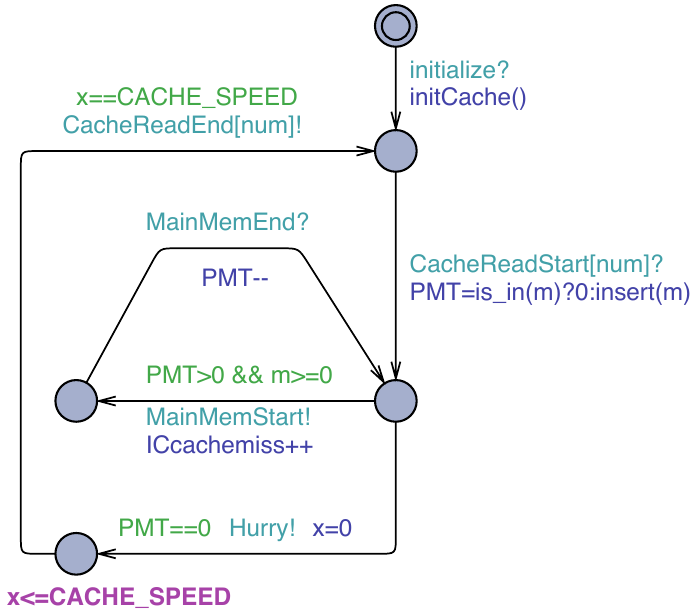}
  \caption{Instruction Cache}
\label{fig-icache}
\end{figure}

The data cache is a bit more involved (Fig.~\ref{fig-mem}).  For a
read/hit operation it behaves almost like the instruction cache
described above.  For write operations, a \emph{write buffer} (not
given here) is used and moreover the timing depends on the type
(load/store), addresses involved in the operation, and whether another
write/read operation is already in progress (and to which line in the
write buffer).  We have tried to design an accurate model of the data
cache: data cache operations are the major factor in the WCET for most
of the programs and a faithful model is required to compute tight
bounds.  How the model of the data cache was built is described in
Section~\ref{sec-tuning}.

\subsection{Pipeline Model}
The model of the pipeline is rather simple except the memory stage (M)
which is a bit more complicated.  The F stage automaton fetches the
next instruction if no branch delay stall occurs (see
Section~\ref{sec-tuning}).  The function \texttt{stall()} of the F stage
automaton determines whether such a stall should occur or not.  If the
next instruction can be fetched, it is fetched from the instruction
cache \texttt{CacheReadStart[INSTR\_CACHE]!} (this signal is urgent
and synchronized with the instruction cache).  When the fetch is
completed the instruction is transferred to the next D stage, as soon
as it is ready to be fed with a new instruction. The D stage, E stage
and W stage are similar.  Notice that the duration of an instruction
may vary from one instruction to the other (\eg long multiplication
may take longer than additions) or because a conditional instruction
is not executed: the actual duration is set when a new instruction
arrives in the E stage (\texttt{DUR\_INSTR=dur()}).  A special signal
\texttt{prog\_completed?} is received from the program and marks the
last instruction of the program. The program is completed when this
last instruction flows out of the last stage (W) of the pipeline and
this corresponds to reaching location \texttt{DONE} of the W stage.
\begin{figure*}[hbtp]
  \centering
\small
  \begin{tabular}[t]{c}
  \includegraphics[scale=1]{\fname{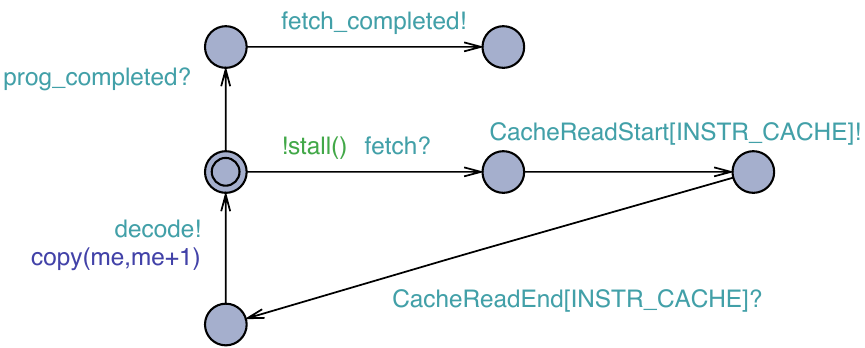}}\\
  F Stage
\end{tabular} \medskip
\begin{tabular}[t]{c}
  \includegraphics[scale=1.2]{\fname{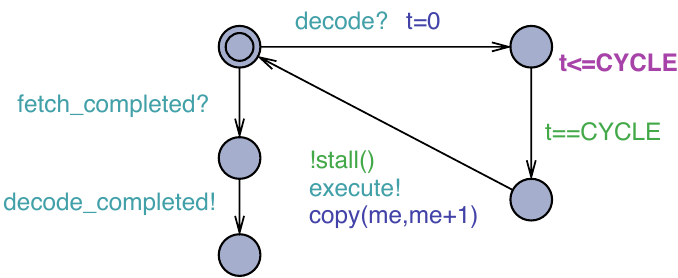}}\\
  D stage
\end{tabular} \medskip
\begin{tabular}[t]{c}
  \includegraphics[scale=1.2]{\fname{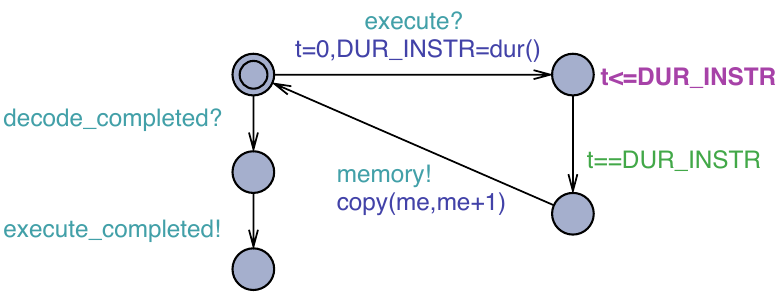}}\\
  E Stage
\end{tabular} \medskip
\begin{tabular}[t]{c}
  \includegraphics[scale=1.2]{\fname{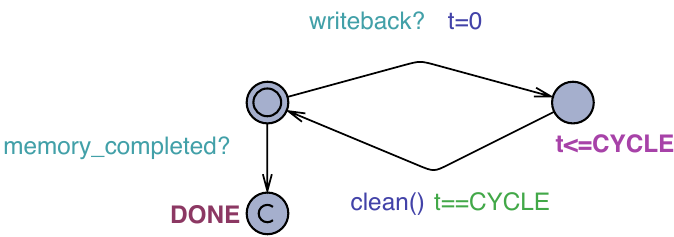}}\\
  W Stage
\end{tabular}
  \caption{Timed Automata for  the Fetch, Decode, Execute and WriteBack Stages.}
\label{fig-exec-decode-stage}
\end{figure*}
The automaton for the M stage is given in Fig.~\ref{fig-mem}: when an
instruction is performed and it is a memory transaction, it issues a
sequence of read/write requests to the data cache.

\begin{figure}[hbtp]
  \centering
  \includegraphics[width=0.9\linewidth]{\fname{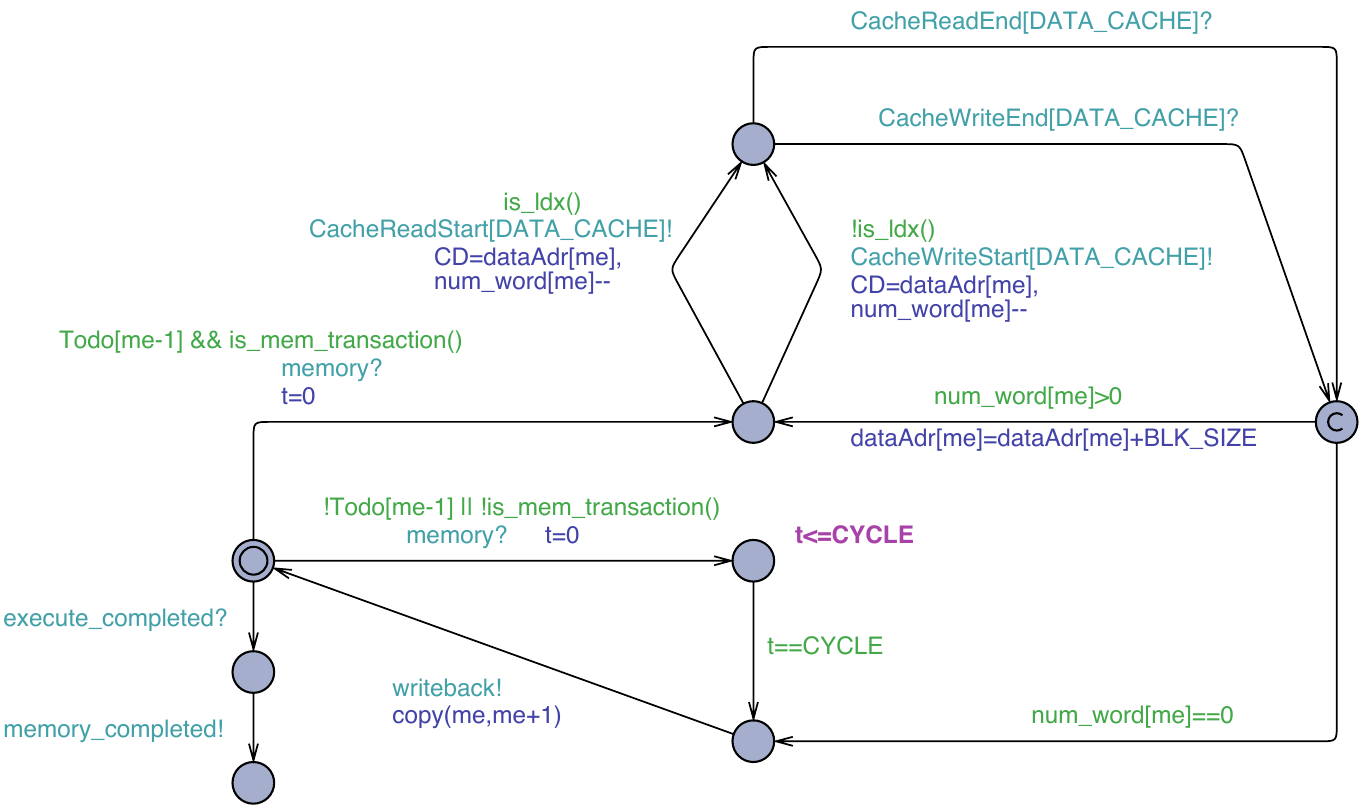}}
  \includegraphics[width=0.9\linewidth]{\fname{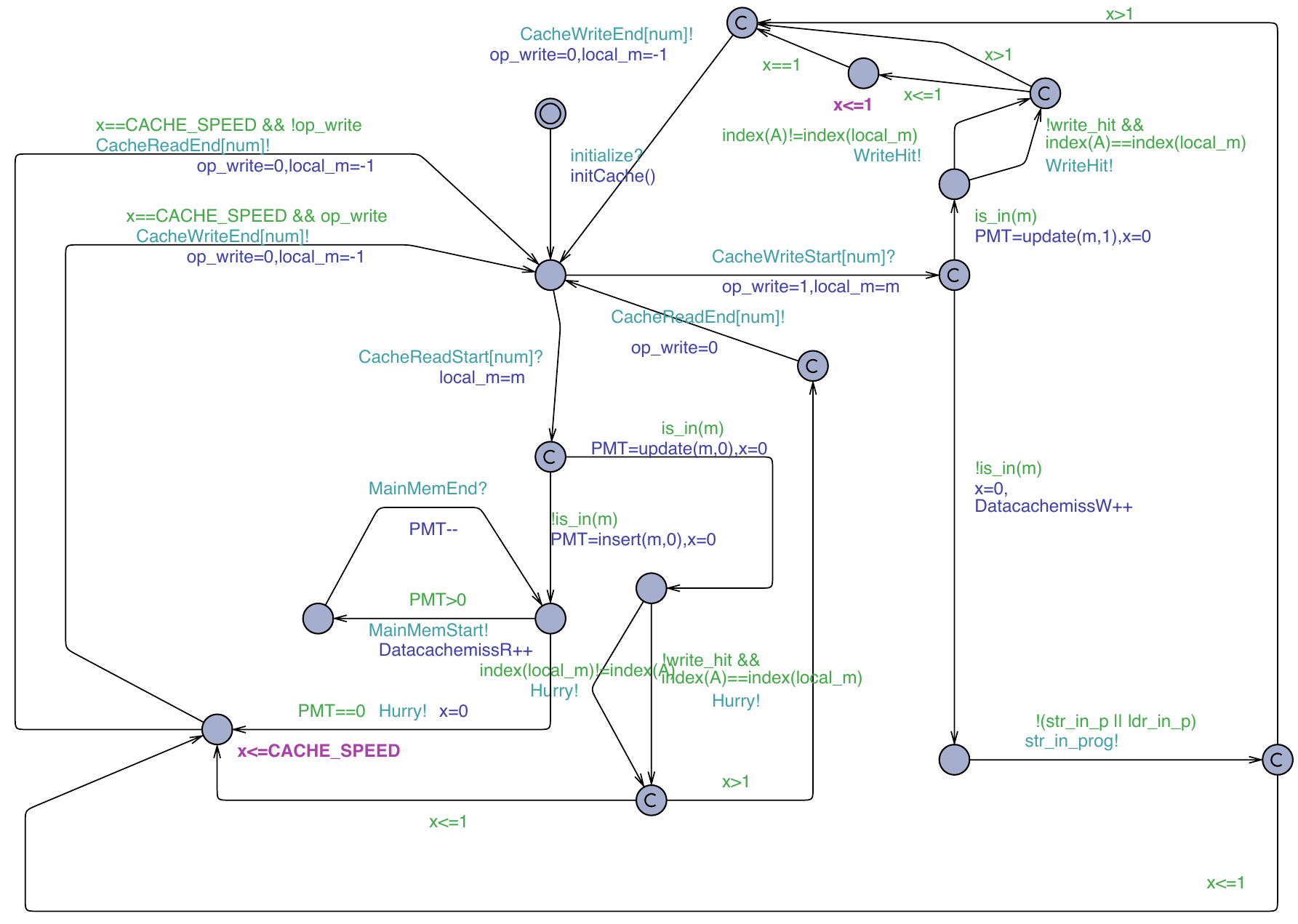}}
  \caption{Memory Stage and Data Cache}
\label{fig-mem}
\end{figure}


\section{Implementation}
\label{sec-implem}

We have implemented the construction of the CFG
(Section~\ref{sec-cfg}) and the computation of the WCET-equivalent
program (Section~\ref{sec-wcet}).
The architecture of our tool is given in Fig.~\ref{fig-tool-chain}. 
Together with a parser of ARM binary programs it comprises several
thousand C++ lines of code.  We have implemented very efficient
versions of \emph{post-dominators
  algorithms}~\cite{tarjan-79,Georgiadis-06} and \emph{post dominance
  frontiers algorithms}~\cite{cooper-01} as they are used intensively
both in \texttt{Compute CFG} and \texttt{Compute WCET-equiv}.  To
obtain the binary program we use the GCC tool suite (gcc, objdump)
from Codesourcery~\cite{codesourcery}.

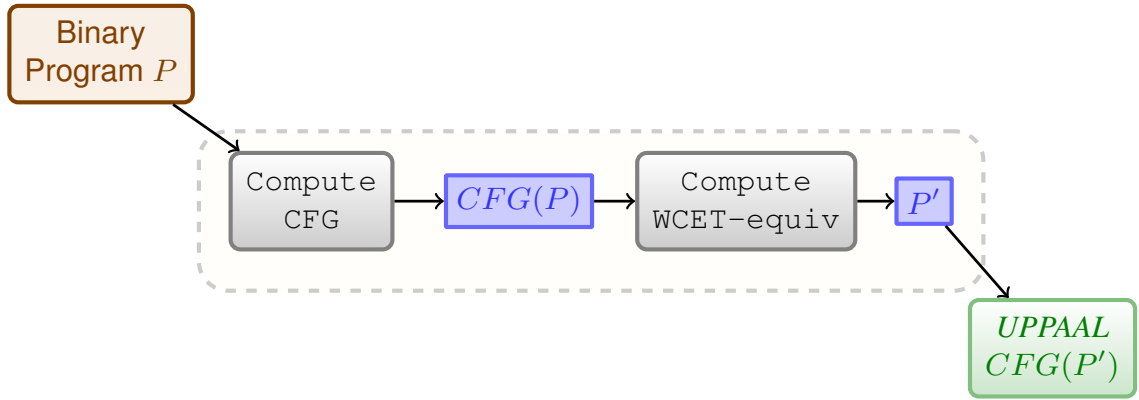
\begin{figure}[thbp]
\centering
\small
\scalebox{1.3}{\begin{tikzpicture}[
  input/.style={ 
    rectangle, 
    color=orange!50!black,
   draw=orange!50!black,
    rounded corners=0.9mm,
    minimum size=6mm, 
    very thick, 
    bottom color=orange!80!black!10, 
    top color=orange!70!black!10, 
    font=\sffamily
},
  output/.style={ 
    color=green!50!black,
    rectangle, 
    minimum size=6mm, 
    rounded corners=0.9mm,
    very thick, 
    draw=green!50!black!50, 
    top color=white, 
    bottom color=green!50!black!20, 
    font=\itshape 
},
inter/.style={color=blue!80,rectangle,rounded corners=0.1mm,very thick,fill=blue!20,draw=blue!60},
mycomp/.style={ 
  rectangle,minimum size=6mm,rounded corners=2mm, 
  very thick,draw=orange!50, top color=white,bottom color=orange!20,
  font=\ttfamily},
othercomp/.style={ 
  rectangle,minimum size=6mm,rounded corners=1mm, 
  very thick,draw=black!50, top color=white,bottom color=black!20,
  font=\ttfamily},
result/.style={ 
minimum size=6mm,rounded corners=0.1mm, 
  very thick,
  top color=white,bottom color=red!30,
  font=\sffamily},
 node distance=1cm and 1.9cm, thick]

\node (arm-file) [input] {
  \begin{tabular}[t]{c}
Binary \\ Program $P$
\end{tabular}
};
\node (compute-cfg) [othercomp,below right= of arm-file,yshift=-.5cm,xshift=.23cm] { \begin{tabular}[h]{c}
Compute \\ CFG 
\end{tabular}};
\node (cfg) [inter, right=of compute-cfg,xshift=.2cm] {$CFG(P)$};

\node (slice) [othercomp, right=of cfg,xshift=.4cm] {\begin{tabular}[h]{c}
Compute \\ WCET-equiv
\end{tabular}};
\node (pp) [inter, right=of slice,xshift=-.1cm] {$P'$};

\node (output) [output,below right=of pp,xshift=-.6cm,yshift=-.5cm] {
  \begin{tabular}[t]{c}
UPPAAL \\ $CFG(P')$
\end{tabular}
};

\path[->] (arm-file) edge[]  (compute-cfg) ;
\path[->] (compute-cfg) edge[swap]  (cfg) ;
\path[->] (cfg) edge[]  (slice) ;
\path[->] (slice) edge[]  (pp) ;
\path[->] (pp) edge[]  (output) ;

\begin{pgfonlayer}{background}
\node (hull) [fill=LemonChiffon!10,draw=black!20,dashed,rounded corners=3mm,dashed,inner sep=.3cm,very thick,rectangle,yshift=-.1cm,fit=(compute-cfg) (pp)] {}; 
\end{pgfonlayer}

\end{tikzpicture}
}
\caption{Tool Chain Overview}
\label{fig-tool-chain}
\end{figure}

Our tool produces a bundle of files: a ready-to-analyse file
containing the \uppaal timed automata models of the program $P'$ and
the hardware models\footnote{The layout of the CFG is produced using
  \emph{dot}, \url{http://www.graphviz.org/}.} $CFG(P')$; a \emph{dot}
file with the graph of $P'$ and a ready-to-compile C++ file that
contains a simulator of the program $P'$. This last file can be
compiled and used to compute useful information like the ranges of
registers.  Notice that during the first phase \texttt{Compute CFG} we
compute the range of the \emph{stack pointer} and thus the tool can
also be used as a stack analyser.  To compute the WCET we check
property R($K$) (Section~\ref{sec-reach}) using \uppaal.

For the binary programs we have analysed, the time it takes to compute
the output file from a binary program is negligible (less than a
second).  The automata of the programs of Table~\ref{tab-results} and
the \emph{dot} graphs are available from
\url{http://www.irccyn.fr/franck/wcet}.

\section{Experiments}\label{sec-experiments}

\subsection{Methodology}
The program $P$ to analyse is encapsulated in a template function: an
example of use is given for program FIBO in
Listing~\ref{lis:measurement}.
\begin{center}
\begin{minipage}[t]{0.56\linewidth}
\begin{lstlisting}[language=C,numbers=none,caption={Code snippet of instrumentation with FIBO},label={lis:measurement},basicstyle={\fontsize{7}{8}\selectfont\ttfamily}]
#define timerToCPUClockRatio 12

main ()
{
  int result;
  unsigned int start;
  unsigned int stop;
  
  start = timerGetValue(1);
  result = fib(300);
  stop = timerGetValue(1);
  printf("fib(300): %d, time=%lu\n", result, 
                (stop-start)*timerToCPUClockRatio);
  while (1);
}
\end{lstlisting}
\end{minipage}
\end{center}
Given $P$, we let $t(P)$ be the encapsulated program.  Measuring the
execution time of $P$ consists in (1) reading a \emph{hardware} timer
(\texttt{timerGetValue}) into a $start$ variable, (2) calling the
program $P$, and (3) reading the timer again into a $stop$ variable and
(4) printing\footnote{the Armadeus APF9328 board has a serial
  interface and in-rom drivers and printf function.} the difference
$stop - start$.  The function \texttt{timerGetValue} (assembly code)
has been designed to read a hardware timer (See next paragraph).

The measurement error is is \plusminus$12$ processor
cycles. 
The program $t(P)$ is compiled and linked.  Running it on the ARM9
will print out the number of cycles taken by the program $P$: this
figure is given in column ``Measured WCET'' in
Table~\ref{tab-results}.

To faithfully compute the WCET of $P$ using our method, we take as
input of our tool chain $t(P)$.  $t(P)$ is transformed (using Compute
CFG and Slice) into an \uppaal automaton as described in
Section~\ref{sec-implem}.  In this automaton a dedicated clock
\texttt{GBL\_CLK} is reset when the instruction\footnote{We can
  identify this instruction in \texttt{timerGetValue}.} of $t(P)$ that
reads the hardware timer flows out of the M stage (reading the timer
in function \texttt{timerGetValue} is done using a load instruction).
The final state of the automaton is reached when the second occurrence
of the instruction that reads the timer flows out of the W stage.  The
computed WCET is given in column ``Computed WCET'' in in
Table~\ref{tab-results}.  Column ``\uppaal'' in
Table~\ref{tab-results} gives the time \uppaal takes to check the
reachability property ``Is it possible to reach a final states with
\texttt{GBL\_CLK} $\geq K+1$ ?'' and this property is false and was
true for $K$. In this case $K$ is the computed WCET.

\subsection{Measuring Time on the Hardware}
\label{sec-measuring}

Measuring execution time on the hardware may be done by using an
external device like an oscilloscope or by using one of the embedded
hardware timers. In both cases, the program must be instrumented. In
the first case, using a General Purpose I/O (GPIO) device, a signal is
set to 1 at the start of the measure and to 0 at the end and the
oscilloscope measures the time between the rising and the falling
edge. In the second case, a free running timer is launched. It is read
at the start and at the end of the measure. The difference of both
values gives the execution time. This supposes the clock frequency of
the hardware timer is close enough to the clock frequency of the
processor to allow accurate measurements. By close enough we fix the
measurement error to less than \plusminus$1\%$ of the measurement. So
a hardware timer clock frequency two orders of magnitude lower than
the processor clock frequency would be accurate enough if the program
to measure executes in $\geq 10000$ cycles.

On the MC9328MXL the maximum available frequency for the hardware
timers is $\frac{1}{12}th$ the processor clock frequency. So a program
executing in $\geq 1200$ cycles may be accurately measured (less than
$1\%$ error).

\subsection{Tuning the Hardware Model}
\label{sec-tuning}
The ARM9TDMI Technical Reference Manual~\cite{arm9tdmi} gives
pipeline timings according to the kind of instructions together with
some examples of load delays and branch delays.
However these timing information about the ARM920T processor and the
MC9328MXL micro-controller are not enough to design accurate formal
models of the hardware.

To overcome this, we have carefully crafted programs to stress
particular features of the hardware and determine the precise timing
of some sequences of instructions. The basis of this
\emph{identification phase} consists in measuring the difference in
execution times of two variants of the same loop.  The second variant
contains a sequence of instructions for which we want a precise
timing. The execution time difference between the two variants is the
execution time of this sequence multiplied by the number of
iterations. Using a large number of iterations minimizes the
measurement error.

For memory accesses, variants may differ only by the memory alignment
of data because timings may be different if a subsequent cache access
is done in the same cache set or in a distinct cache set. And this can
have a huge impact on the computed WCET if not modelled properly.

To remove the execution time of the measurement code, the loop is
executed twice, one with 10000 turns and one with 20000 (for
instance). The difference of execution time is the execution time of
10000 turns. The loop is dried run to copy it into the instruction
cache.

Running a large set of special-purpose programs, we were able to refine
the model of the data cache and obtain a rather precise formal model
(see Fig.~\ref{fig-mem}).

\subsection{Test program example}

This methodology allowed us to work out an undocumented behavior of
the data cache. The loop in Listing~\ref{lis-dcache-timing} is executed
10000 times and 20000 times and the difference is 70000 cycles. This
result is consistent with the timing of the instructions found
in~\cite{arm9tdmi} since the instructions in the loop take 7 cycles to
execute (execution time of each instruction is given as comment in
listing \ref{lis-dcache-timing}).

\begin{center}
\begin{minipage}[t]{0.5\linewidth}
\begin{lstlisting}[language=AssemblerARM9,numbers=none,caption={Data cache timing behavior test},label={lis-dcache-timing},basicstyle={\fontsize{7}{8}\selectfont\ttfamily}]
.global ld_follow_st
ld_follow_st:
  ldr r2,[r1,#0]		%@ preload both addresses%
  ldr r2,[r1,#16]		%@ in the data cache%
ld_follow_st_loop:
  str r2,[r1,#0]			%@ 1 cycle%
  ldr r2,[r1,#16]			%@ 1 cycle%
  sub r0,r0,#1			%@ 1 cycle%
  cmp r0,#0			%@ 1 cycle%
  bgt ld_follow_st_loop		%@ 3 cycles%
  bx lr
\end{lstlisting}
\end{minipage}
\end{center}

However when the argument passed in {\em r1} (the base address used to
do the store and the load) is offset by 16 bytes, the execution time
is 80000 cycles because the instructions in the loop take 1 extra
cycle to execute.

The data cache has 64 sets and 32 bytes per line. So, the index is
located in bits 10 to 5 of the address. In the first case, with
$\sem{r_1} = 0x8004d94$ and $\sem{r_1}+16 = 0x8004da4$, the indexes are
different.
%
In the second case, with $\sem{r_1} = 0x8004da4$ and $\sem{r_1}+16 =
0x8004db4$, the indexes are equal.
%
So, after a store in a set, an access to the same set incurs a 1 cycle stall.

\subsection{Experiments on Benchmark Programs}
\label{sec-bench}
The results we have obtained on some benchmark
programs\footnote{\url{http://www.mrtc.mdh.se/projects/wcet/benchmarks.html}}
from \malar~\cite{Gustafsson:WCET2010:Benchmarks} are reported in
Table~\ref{tab-results}.  The programs we have analysed are available
from \url{http://www.irccyn.fr/franck/wcet}: we have archived the C
source program, the (de-assembled) encapsulated binary program (.arm
file), the \uppaal model (and property) and the \emph{dot} graph.  We
have not given the time it takes to do the slicing because it is less
than a second.  Regarding the benchmarks themselves, we point out
that:
\begin{itemize}
\item the difficulty of measuring the WCET is not related to the size
  of the program; some programs are huge but contain a few paths,
  others are very compact but have a huge number of paths.
\item they are designed to be representative of the difficulties
  encountered when computing WCET: for instance \textbf{janne-complex}
  contains two loops and the number of iterations of the inner loop
  depends on the current value of the counter of the outer loop (in a
  non regular way).
\item we have experimented on different compiled versions of the same
  program (options O0, O1, O2) because the binary code produced
  stresses different parts of the hardware.
\item we have checked various cases of the same programs with
  different initial stack pointer alignment, \ldots
\item we have multiplied the number of iterations of the benchmarks
  (\eg we compute the execution time of $Fib(300)$\footnote{Even if we
    cannot compute $Fib(300)$ we can compute the time it takes to
    compute it.}); this way a modelling error (\eg that adds $1$ cycle
  per iteration) is revealed and will incur a huge
  over-approximation.
\end{itemize}
In this sense the programs we have experimented on should not be
considered too easy.

The results in Table~\ref{tab-results} are divided into three main
sections:
\begin{itemize}
\item \emph{Single-Path programs.} The results of this section show
  that the abstract models (program and hardware) we have designed are
  adequate for obtaining tight bounds for the WCET. Even for
  \textbf{janne-complex} and its intriguing inner loop counts that
  depend on the outer loop counter, the maximum error is $3.2\%$.
  This also validates the accuracy of the program model we have
  computed (using slicing and no loop unrolling nor maximum loop
  bounds).
\item \emph{Single-Path programs with data dependent instruction
    durations.}  Instructions like MUL/MLA can take between $3$ to $6$
  cycles in the E stage (and SMULL $4$ to $7$). This section
  highlights one of the advantages of the timed automata models of the
  hardware. Indeed, in the timed automaton of the E stage
  (Fig.~\ref{fig-exec-decode-stage}), we can replace the guard
  \texttt{t==DURATION} with \texttt{MINDUR<= t <= MAXDUR} and (add the
  assignments to \texttt{MINDUR} and \texttt{MAXDUR}).  With this new
  E stage, we compute an \emph{interval} for the WCET.  Notice that
  this model is robust against \emph{timing anomalies} because we
  explore the state space without any assumption like ``always the
  shortest duration'' or ``always the largest duration''; the duration
  of the instruction is picked non-deterministically in
  [\texttt{MINDUR},\texttt{MAXDUR}] every time the transition is
  taken.  This explains the difference between the computed and the
  measured WCETs because in the measured WCET the worst-case duration
  for the MUL/MLA/SMULL instructions is never encountered.  In this
  case, column $\frac{(C-M)}{M}$ of Table~\ref{tab-results} does not
  represent the over-approximation of the computed WCET but rather the
  under-approximation of the measured WCET with the chosen input data.
\item \emph{Multiple-path programs.} These programs contain some
  branching that are input data dependent. The measured WCET is the
  execution time (on the hardware) obtained with input data that are
  supposed\footnote{Note that the benchmark programs usually indicate
    which data should give the WCET but in some cases this is
    erroneous.}  to produce the WCET.  The computed WCET result
  considers all the possible input data.  For \textbf{bs-O0,O1,O2} the
  WCET is very small and measurement errors are more than $1\%$ (see
  Section~\ref{sec-measuring}).  Program \textbf{cnt} starts with the
  initialization of a $10\times10$ matrix. In \textbf{cnt-O2}, the
  compiler unrolls the initialization loop to a list of 100
  consecutive store instructions. So \textbf{cnt-O2} stresses the
  write buffer and we have to take into account the fact that the
  Write Buffer may be full. In this case, the data cache has to wait
  to make a write until the write buffer is not full.

\end{itemize}
Compared to existing methods and results our method has several
advantages:
\begin{itemize}
\item computation of the CFG and of the reduced program automaton is
  fully automated (no loop bounds annotation needed);
\item we use concrete caches and a detailed models of the hardware;
\item the model of the hardware can be tuned easily (\eg durations of
  instructions can be an interval instead of a fixed value); as
  emphasised in~\cite{cassez-acsd-11}, changes in the processor speed
  can also be modelled easily (using a timed automaton that sets the
  processor speed). This enables us to compute WCET with \emph{power}
  related constraints.  Another advantage is that changing the
  processor (\eg ARM7) requires only to change the pipeline automata.
\item we compare the computed results to actual execution times using
  a rigorous protocol. The relative error in the computed results can
  be assessed and the results show that our method and models give very
  tight bounds.
\end{itemize}

\newcommand{\myahref}[1]{{~\textbf{#1}}}
\begin{table*}[thbtp]
  \centering
  \begin{tabular}{||l||c|c|c|c|c|c|c||}\hline\hline
    ~\textbf{Program}~~ &
    \begin{tabular}[h]{c}
loc$^\dagger$
\end{tabular}
& 
\begin{tabular}[h]{c}
UPPAAL \\ Time/States Explored$^\P$
\end{tabular} & 
\begin{tabular}[h]{c}
Computed \\ WCET (C) 
\end{tabular}
 &  
\begin{tabular}[h]{c}
  Measured \\ WCET (M) 
\end{tabular} &
\begin{tabular}[h]{c}
$\frac{(C-M)}{M} \times 100$
\end{tabular}
& ~~Abs$^\S$~~ \\\hline\hline

     \multicolumn{7}{||c||}{\textbf{Single-Path Programs}}  \\ \hline\hline
    ~\myahref{fib-O0} & 74  &  1.74s/74181 & 8098  & 8064 & \textbf{0.42\%} & 47/131\\ \hline
    ~\myahref{fib-O1} & 74  &  0.61s/22332 & 2597  & 2544 &  \textbf{2.0\%} & 18/72\\ \hline
    ~\myahref{fib-O2} & 74  & 0.3s/9710 & 1209  & 1164 &  \textbf{3.8\%} & 22/71\\ \hline
    ~\myahref{janne-complex-O0}$^\ast$~ & 65 &  1.15s/38014 & 4264 & 4164  &  \textbf{2.4\%} & 78/173\\ \hline
    ~\myahref{janne-complex-O1}$^\ast$~ & 65  & 0.48s/14600 & 1715 & 1680  &  \textbf{2.0\%} & 30/89\\ \hline
     ~\myahref{janne-complex-O2}$^\ast$~ & 65  &  0.46s/13004 & 1557 & 1536  &  \textbf{1.3\%} & 32/78\\ \hline
    ~\myahref{fdct-O1} & 238   &  1.67s/60418 & 4245  &  4092 &  \textbf{3.7\%}  & 100/363 \\ \hline
    ~\myahref{fdct-O2} & 238   &  3.24s/55285 &  19231 & 18984   &  \textbf{1.3\%} & 166/3543 \\ \hline\hline
    \multicolumn{7}{||c||}{\textbf{Single-Path Programs$^\ddagger$ with MUL/MLA/SMULL instructions (instructions durations depend on data)} } \\ \hline\hline
    ~\myahref{fdct-O0} & 238  &   2.41s/85007 &  [11242,11800]  & 11448   &  \textbf{3.0\%} & 253/831 \\ \hline\hline
    ~\myahref{matmult-O0}$^\ast$  & 162  & 5m9s/10531230  &  [502850,529250]  & [511584,528684]  &   \textbf{0.1\%} & 158/314 \\ \hline
    ~\myahref{matmult-O1}$^\ast$ & 162  & 1m32s/1122527  & [130001,156402]  & [127356,153000]  & \textbf{2.2\%} & 71/172 \\ \hline
    ~\myahref{matmult-O2}$^\ast$ & 162  &  43.78s/1780548  &  {[122046,148299] }  & [116844,140664]  &  \textbf{5.4\%} & 75/288 \\ \hline
    ~\myahref{jfdcint-O0}  & 374 &   2.79s/100784 &  {[12699,12699]}  & 12588 &  \textbf{0.8\%}  & 159/792  \\ \hline
    ~\myahref{jfdcint-O1}  & 374 &   1.02s/35518 &  {[4897,4899]}  & 4668 &  \textbf{7.0\%}  & 25/325  \\ \hline
    ~\myahref{jfdcint-O2}  & 374 &   5.38s/175661 &  [16746,16938]   & 16380 &  \textbf{3.4\%}  & 56/2512  \\ \hline
    \multicolumn{7}{||c||}{\textbf{Multiple-Path Programs}} \\ \hline\hline
    ~\myahref{bs-O0} & 174 &   42.6s/1421474 &  1068   & 1056  &  \textbf{1.1\%}  & 75/151 \\ \hline
    ~\myahref{bs-O1} & 174 &   28s/1214673 &  738   & 720  &  \textbf{2.5\%}  & 28/82 \\ \hline
    ~\myahref{bs-O2} & 174 &   15s/655870 &  628  & 600  &  \textbf{4.6\%}  & 28/65 \\ \hline
    ~\myahref{cnt-O0}$^\ast$ & 115 &  2.3s/76238 & 9028   & 8836 &  \textbf{2.1\%} & 99/235 \\ \hline
    ~\myahref{cnt-O1}$^\ast$ & 115 &  1s/27279 & 4123  & 3996 &  \textbf{3.1\%} & 42/129 \\ \hline
    ~\myahref{cnt-O2}$^\ast$ & 115 &  0.5s/11540 & 3065   & 2928 &  \textbf{4.6\%} & 39/263 \\ \hline
    ~\myahref{insertsort-O0}$^\ast$ & 91 &  10m35s/24250737  &  3133  & 3108 & \textbf{0.8\%}  & 79/175 \\ \hline
    ~\myahref{insertsort-O1}$^\ast$ & 91 &   7m2s/11455293  &  1533  & 1500 & \textbf{2.2\%}  & 40/115 \\ \hline
    ~\myahref{insertsort-O2}$^\ast$ & 91 &    11.5s/387292 &  1371  & 1344  &  \textbf{2.0\%}  & 43/108 \\ \hline
    ~\myahref{ns-O0}$^\ast$ & 497 &   83.4s/3064315 &  30968   & 30732 & \textbf{0.8\%} & 132/215 \\  \hline
    ~\myahref{ns-O1}$^\ast$ & 497 &  11.3s/368719  &    11701 & 11568 &  \textbf{1.1\%} & 61/124 \\  \hline
    ~\myahref{ns-O2}$^\ast$ & 497 &  29s/1030746 &  7343   & 7236 &  \textbf{1.4\%} & 566/863 \\  \hline\hline
  \end{tabular}
  \begin{flushleft}
    \rule{3cm}{.3pt}\\
    $^\dagger$lines of code in the C source file \\
    $^\ddagger$ $\frac{(C-M)}{M} \times 100$ computed using the upper bound for $C$ (see Section~\ref{sec-bench}). \\
    $^\S$Non Abstracted instructions/Instructions \\
    $^\ast$Program selected for the
    WCET Challenge 2006 \\
    $^\P$Time in min/seconds on Intel Dual Core i3 3.2Ghz 8GB RAM
  \end{flushleft}

\vspace*{-.3cm}
  \caption{Results. \lowercase{file-Ox indicates that file was compiled using gcc -Ox (optimization option).}}
  \label{tab-results}
\vspace*{-1cm}
\end{table*}




\section{Conclusion and Future Work}\label{sec-conclu}

In this paper we have presented a framework based on program slicing
and model-checking to compute WCET for programs running on
architectures featuring pipelining and caching.  We have exemplified
the method by providing formal models of the ARM920T.  Moreover we
have compared the computed results with actual execution times on the
real hardware.
Our method is
modular and altering the model of the hardware can
  be done easily using the timed automata models and the CFG
is computed automatically.

In some cases there are a huge number of paths to be explored and
there is no hope that an exhaustive search will compute any result in
a life-time. Examples of such programs are multiple-path programs (\eg
program binary sort) with a lot of input data dependent branchings.
To overcome this problem we are developing a \emph{branch and bound}
techniques.  
We are also currently extending the framework to handle:
\begin{itemize}
\item generation of \emph{traces}: \uppaal can generate a witness
  symbolic trace of a path yielding the WCET.  From this symbolic
  trace, we want to compute initial values of the input data that
  produce this trace.  This can be achieved using techniques similar
  to Counter Example Guided Abstraction Refinement
  (CEGAR)~\cite{clarke-cegar-acm-03}.
\item co-processor calls. This can be achieved by adding a timed
  automaton model of the co-processor.
\item for some programs like OS kernels, interrupts can be generated
  and trigger interrupt handlers.  Computing the WCET in this case is
  not easy as it requires a model of the interrupts arrivals \eg ``the
  interval between two interrupts of type $i$ is at least $t$ time units''.
  We can model interrupts arrivals using timed automata.
\end{itemize}


\paragraph{\bfseries Acknowledgements.} The authors wish to thank Tim Bourke for
the careful proof-reading of the paper and many helpful comments.

\bibliography{wcet}

\end{document}